\newtheorem{thm}{Theorem}
\newtheorem{prop}{Proposition}
\newtheorem{Corollary}{Corollary}
\newtheorem{lemma}{Lemma}
\newtheorem{remark}{Remark}
\DeclareMathOperator{\re}{\mathbb{R}}
\DeclareMathOperator{\na}{\mathbb{N}}
\newcommand{\abs}[1]{\left|#1\right|}
\newcommand{\E}{\mathbb{E}}
\newcommand{\ind}{\mathds{1}}
\newcommand{\Prob}{\mathbb{P}}
\newcommand{\F}{\mathcal{F}}
\journal{Mathematics and Computers in Simulation}
\begin{document}

\begin{frontmatter}

% "Title of the paper"
\title{Weak Convergence and Optimal Tuning of the Reversible Jump Algorithm}

\author[label1]{Philippe Gagnon\corref{cor1}}
\address[label1]{Department of Statistics, University of Oxford, 24-29 St Giles', Oxford, OX1 3LB, United Kingdom}
\ead{philippe.gagnon@stats.ox.ac.uk}
\cortext[cor1]{Corresponding author}
\author[label2]{Myl\`{e}ne B\'{e}dard}
\address[label2]{D\'{e}partement de math\'{e}matiques et de statistique, Universit\'{e} de Montr\'{e}al, C.P. 6128, Succursale Centre-ville, Montr\'{e}al, QC, H3C 3J7, Canada}
\ead{mylene.bedard@umontreal.ca}
\author[label3]{Alain Desgagn\'{e}}
\address[label3]{D\'{e}partement de math\'{e}matiques, Universit\'{e} du Qu\'{e}bec \`{a} Montr\'{e}al, C.P. 8888, Succursale Centre-ville, Montr\'{e}al, QC, H3C 3P8, Canada}
\ead{desgagne.alain@uqam.ca}

\begin{abstract}
  The reversible jump algorithm is a useful Markov chain Monte Carlo method introduced by \cite{green1995reversible} that allows switches between subspaces of differing dimensionality, and therefore, model selection. Although this method is now increasingly used in key areas (e.g.\ biology and finance), it remains a challenge to implement it. In this paper, we focus on a simple sampling context in order to obtain theoretical results that lead to an optimal tuning procedure for the considered reversible jump algorithm, and consequently, to easy implementation. The key result is the weak convergence of the sequence of stochastic processes engendered by the algorithm. It represents the main contribution of this paper as it is, to our knowledge, the first weak convergence result for the reversible jump algorithm. The sampler updating the parameters according to a random walk, this result allows to retrieve the well-known $0.234$ rule for finding the optimal scaling. It also leads to an answer to the question: ``with what probability should a parameter update be proposed comparatively to a model switch at each iteration?''
\end{abstract}

\begin{keyword}
%% keywords here, in the form: keyword \sep keyword
 Bayesian inference \sep Markov chain Monte Carlo methods \sep Metropolis-Hastings algorithms \sep Model selection \sep Optimal scaling \sep Random walk Metropolis algorithms.

%% PACS codes here, in the form: \PACS code \sep code

%% MSC codes here, in the form: \MSC code \sep code
 \MSC[2010] 65C05 \sep 62F15

 \end{keyword}

\end{frontmatter}

\section{Introduction}\label{sec_intro_weak}
Markov chain Monte Carlo (MCMC) methods provide algorithms to generate Markov chains having an invariant measure that corresponds to the distribution with respect to which we are interested in computing integrals.
% The implementation of such samplers usually requires the specification of some functions. For instance, at each step of the Metropolis-Hastings (MH) algorithm (\cite{metropolis1953equation} and \cite{hastings1970monte}), the most commonly used method, a candidate for the next state of the Markov chain is generated from a proposal distribution (which has to be specified), and accepted according to a probability function (which is provided by the method).
The implementation of such samplers usually requires the specification of proposal kernels. In the Metropolis-Hastings (MH) algorithm (\cite{metropolis1953equation} and \cite{hastings1970monte}), the most commonly used method, the proposal kernel corresponds to a proposal distribution that is used at each step to generate a candidate for the next state of the Markov chain. The candidate is accepted according to a probability function that is provided by the method.
%In a Bayesian context, this means that at each step, an attempt to update the parameters is made using the proposal distribution.
There are no guidelines that are valid in all situations for effectively achieving the specification step, at least no practical ones. The problem is that poor designs of the proposal kernels may lead to inefficient algorithms, resulting in Markov chains that slowly explore their state space, and therefore, inadequate samples (see \cite{peskun1973optimum} and \cite{tierney1998note}). Researchers have therefore focused on finding optimal tuning procedures in special cases.
%The specification of the required functions can be challenging for non-specialists (and even for specialists), which makes them doubt the quality of their outputs. Indeed, a poor design of these functions can lead to an inefficient algorithm, in the sense that , thus producing an inadequate sample (see \cite{peskun1973optimum} and \cite{tierney1998note} for a detailed explanation).

\cite{roberts1997weak} studied the MH algorithm in the situation where the proposal distribution is a normal centred around the current state of the chain; algorithms of this type are called random walk Metropolis (RWM) algorithms. In their case, the specification step consists in selecting the variance of the normal distribution. This task is however not trivial as small variances lead to tiny movements of the Markov chain, while large variances induce high rejection rates of candidates. Assuming that the algorithm is used to sample from a distribution of $n$ independent and identically distributed (i.i.d.) random variables, the authors proved the existence of an asymptotically optimal variance for the random walk as $n\rightarrow\infty$ through a weak convergence argument. They also provided the following simple strategy to identify the (asymptotically) optimal scaling: tune the variance so that the acceptance rate of candidates is $0.234$. This leads to a straightforward implementation of the algorithm. A lot of research has been carried out to generalise this result to more elaborate target distributions (e.g.\ \cite{roberts2001optimal}, \cite{neal2006optimal}, \cite{bedard2007weak}, \cite{bedard2008optimal}, \cite{beskos2009optimal}, \cite{bedard2012scaling}, \cite{mattingly2012diffusion} and \cite{beskos2013optimal}). What has been observed is that the $0.234$ rule is robust, in the sense that it often leads to scalings that are (almost) optimal even when the i.i.d. assumption is violated.

A flaw of MH algorithms is that they do not allow subspace switches, i.e.\ switching from the parameter space of a model to that of another model. This gap was corrected by \cite{green1995reversible} with the introduction of the reversible jump (RJ) algorithm. This method has a tremendous potential because of its capability to deliver information on both the ``good'' models and their parameters from a single output. For instance, \cite{richardson1997bayesian} used it to estimate the number of components of mixtures, and their parameters, simultaneously. The price to pay is that the proposal kernel that has to be specified in order to implement this algorithm is much more complicated. In this paper, we study the RJ algorithm in the same mindset as \cite{roberts1997weak}; we aim at providing guidelines to users and open new research directions towards an automatic RJ algorithm.

Existing research on the RJ algorithm has mainly focused on ways to facilitate subspace switches (e.g.\ \cite{brooks2003efficient}, \cite{hastie2005towards}, \cite{al2004improving} and \cite{karagiannis2013annealed}), and therefore, the exploration of the entire state space. The main drawback of the proposed approaches is the difficulty to implement them. This justifies the need for practical guidelines that will promote accessibility of the RJ algorithm. As a first step towards automated implementation of this method, we focus on a simple sampling context in order to obtain weak convergence results. The context is defined in Section~\ref{sec_context_weak}. We consider the natural generalisation of the target distribution in \cite{roberts1997weak}, i.e.\ the case where the parameters are conditionally i.i.d. given any Model $K$. We assume additional structure on the target to decrease the complexity of the proposal kernel required for the implementation. In particular, it makes the main strategies proposed by the authors mentioned above unnecessary (the functions that are used to map the parameters of a model to those of another are not difficult to identify in our context). We assume that Model $K+1$ is comprised of one more parameter than Model $K$, and that when switching from Model $K$ to Model $K+1$, the parameters that are common in both models retain their distribution. Random walks are used to explore the model and parameter spaces. This context can correspond to variable selection in linear regression in the situation where the scale parameters of the error terms are known and the variables are orthogonal. Although the assumptions are strong, we believe that the tuning rules found under them are robust. We also believe that this is a good starting point for proving weak convergence results for more general target distributions and RJ algorithms.

The weak convergence of the sequence of stochastic processes engendered by the algorithm towards diffusion processes is established in Section~\ref{sec_problem_weak}. We explain in Section~\ref{sec_optimisation} that this result allows to establish the validity of the $0.234$ rule of \cite{roberts1997weak}, when the acceptance rate is computed considering only the iterations in which it is proposed to update the parameters. This result also allows finding the probability according to which a model switch (and therefore a parameter update) should be proposed at each iteration. Essentially, the poorer is the design of the proposal distribution generating  candidates for the additional parameter when switching from Model $K$ to Model $K+1$, the higher should be the probability to propose model switches. This is intuitive given that a poor design leads to poor candidates, and therefore, high rejection rates. The result indicates that the optimal probability for proposing parameter updates decreases as $1/\sqrt{A}$, where $1/A$ is equal (up to a known constant) to the probability to propose and accept model switches. Therefore, given that (in our case) the optimal probability for proposing parameter updates when $A=2$ is $0.415$, this provides a rule for constructing the proposal distribution for the different movement types. The proposed rules for tuning the RJ algorithm are described in detail in Section~\ref{sec_optimal_impl}. The advantage of considering a simple setting is that we obtain explicit solutions. In Section~\ref{sec_optimal_impl}, we also present situations in which we conjecture that our results hold. They include, for instance, posterior distributions arising from robust principal component regressions. We show in Section~\ref{sec_simulation_weak} how the design of the RJ algorithm has an impact on the produced samples. We observe that, for moderate dimensions, selecting any probability between $0.2$ and $0.6$ for proposing parameter updates is almost optimal, which seems to be a valid general guideline.

%The key result, which is the weak convergence of the sequence of stochastic processes engendered by the algorithm, is presented in Section~\ref{sec_problem_weak}. In Section~\ref{sec_optimisation}, this result is used to propose an optimal design for the sampler. This is followed in Section~\ref{sec_optimal_impl} by a detailed procedure to implement an efficient RJ algorithm. In that section, we also discuss extensions of our results to more elaborate target distributions which include, for instance, posterior distributions arising from a robust principal component regression. In Section~\ref{sec_simulation_weak}, we illustrate the impact of the design of the sampler via a simulation study. Finally, the conclusion is given in Section~\ref{sec_conclusion_weak}. The proof of the weak convergence is substantial and can be found in Section~\ref{sec_proof_thm_1}. Results used in this proof that also have substantial demonstrations are presented in Section~\ref{lemmas} to ease the reading. For the same reason, the proofs of propositions included in the text can be found in Section~\ref{sec_proofs_props}.

\section{Sampling Context}\label{sec_context_weak}
Let
$$
 \pi_n(k,\mathbf{x}^k)=p_n(k)\prod_{i=1}^{n+k} f(x_i^k)
$$
be the joint posterior distribution of $(K,\mathbf{X}^{K})$, where $K\in\{1,\ldots,\lfloor \sqrt{n}\log n\rfloor\}$ ($\lfloor \cdot \rfloor$ is the floor function), $\mathbf{X}^{K}:=(X_1^{K},\ldots, X_{n+K}^{K})\in\re^{n+K}$, $n\in\{7,8,\ldots\}$, $f$ is a strictly positive one-dimensional probability density function (PDF), and $p_n$ is a probability mass function (PMF) such that $p_n(k)>0$ for all $k\in \{1,\ldots,\lfloor \sqrt{n}\log n\rfloor\}$. We consider $n$ to be an integer greater than or equal to 7 just to avoid technical complications in the proofs. The random variable $K$ represents the model indicator ($K=1$ implies that Model 1 is considered, for instance), and $\mathbf{X}^{K}$ is the parameter vector of Model $K$. Therefore,  $X_1^1,\ldots,X_{n+1}^1$ are the $n+1$ parameters of Model 1, $X_1^2,\ldots,X_{n+2}^2$ are the $n+2$ parameters of Model 2, etc.
% To simplify the notation, we will denote $K:=K^n$ and then $\mathbf{X}^K:=\mathbf{X}^{K^n}$.
%Note that the random variables $X_1^{K},\ldots, X_{n+K}^{K}$ are conditionally i.i.d. given $K$, and that the random variables $X_i^j$ and $X_i^l$ have the same distribution, for all $i\leq n+j$, when $j\leq l$.
As mentioned in the introduction, $\pi_n$ can correspond to the joint posterior of the models and their parameters in variable selection in the situation where the scale parameters of the error terms are known and the variables are orthogonal. In fact, it represents the situation where, in addition, the models comprised of $n$ variables or less have negligible probabilities (which is modelled by $p_n(k)=0$ for $k\leq n$).

The objective is to obtain adequate samples from the joint posterior distribution of $(K,\mathbf{X}^{K})$ through MCMC methods in order to approximate probabilities, expectations, or any other quantity we might be interested in. The following RJ algorithm is applied to approximate a sample from $\pi_n$:
\begin{enumerate}
 \item Initialisation: set $(K(0),\mathbf{X}^{K(0)}(0))$.
 %, the type of movement that will be attempted $J(m+1)\in\{1,2,3\}$ is generated from $g$, a PMF such that $g(j)>0$ for $j\in\{1,2,3\}$.
 \item[Iteration $m+1$.\hspace{-23.5mm}] \hspace{23.5mm} (Given the current state $(K(m),\mathbf{X}^{K(m)}(m))$.)
 \item Independently generate $U$ and  $U_a$ from a uniform distribution on $(0,1)$.
 \item[3.(a)\hspace{-4.5mm}] \hspace{4.5mm} When $U\leq g(1)$ ($g$ is a PDF such that $g(j)>0$ for all $j\in\{1,2,3\}$), attempt an update of the parameters. Generate $\mathbf{Y}^{K(m)}(m+1)\sim \mathcal{N}(\mathbf{X}^{K(m)}(m),(\ell^2/(n+{K(m)}))\mathcal{I}_{n+K(m)})$, where $\mathbf{Y}^{K(m)}(m+1):=(Y_1^{K(m)}(m+1), \ldots,Y_{n+K(m)}^{K(m)}$ $(m+1))$, $\mathcal{I}_{n+K(m)}$ is the identity matrix of size $n+K(m)$ and $\ell$ is a positive constant. % Generate $U_a\sim\mathcal{U}(0,1)$.
 When
  \begin{equation}\label{eqn_prob_update}
  U_a\leq 1 \wedge \frac{\prod_{i=1}^{n+K(m)} f(Y_i^{K(m)}(m+1))}{\prod_{i=1}^{n+K(m)} f(X_i^{K(m)}(m))},
 \end{equation}
 set $(K(m+1),\mathbf{X}^{K(m+1)}$ $(m+1))=(K(m),\mathbf{Y}^{K(m)}(m+1))$.
 \item[3.(b)\hspace{-4.5mm}] \hspace{4.5mm} When $g(1)<U\leq g(1)+g(2)$, attempt adding a parameter to switch from Model $K(m)$ to Model $K(m)+1$. Generate $U(m+1)\sim q$ ($q$ is a strictly positive PDF).
 % and generate $U_a\sim\mathcal{U}(0,1)$.
 When
 \begin{equation}\label{eqn_prob_add}
   U_a \leq 1 \wedge \frac{f(U(m+1))p_n(K(m)+1)g(3)}{q(U(m+1))p_n(K(m))g(2)},
  \end{equation}
  set $(K(m+1),\mathbf{X}^{K(m+1)}(m+1))=(K(m)+1,(\mathbf{X}(m)^{K(m)},U(m+1)))$.
 \item[3.(c)\hspace{-4.5mm}] \hspace{4.5mm} When $U>g(1)+g(2)$, attempt withdrawing the last parameter to switch from Model $K(m)$ to Model $K(m)-1$.  % Generate $U_a\sim\mathcal{U}(0,1)$.
 When  \begin{equation}\label{eqn_prob_remove}
    U_{\text{$a$}} \leq 1 \wedge \frac{q(X_{n+K(m)}^{K(m)}(m))p_n(K(m)-1)g(2)}{f(X_{n+K(m)}^{K(m)}(m))p_n(K(m))g(3)},
 \end{equation}
 set $(K(m+1),\mathbf{X}^{K(m+1)}(m+1))=(K(m)-1,\mathbf{X}^{K(m)-}(m))$, where $\mathbf{X}^{K(m)-}(m)$ is the vector $\mathbf{X}^{K(m)}(m)$ without the last component (more precisely $\mathbf{X}^{K(m)-}(m):=(X_1^{K(m)},\ldots, X_{n+K(m)-1}^{K(m)})$).
 \item[4.] In case of rejection, set $(K(m+1),\mathbf{X}^{K(m+1)}(m+1))=(K(m),\mathbf{X}^{K(m)}(m))$.
 \item[5.] Go to Step 2.
\end{enumerate}

The resulting stochastic process $\{(K(m),\mathbf{X}^K(m)): m\in\na\}$ is a $\pi_n$-irreducible and aperiodic Markov chain. In addition, it is easily shown that this Markov chain satisfies the reversibility condition with respect to $\pi_n$ (see \cite{gagnon2016thesis}), and therefore, that it is ergodic, which guarantees that the Law of Large Numbers holds.

Regularity conditions are now imposed on the different functions. They allow obtaining the weak convergence result stated in Section~\ref{sec_result_weak}.

First, we assume that the usual smoothness conditions on the function $f$ are satisfied: $f\in\mathcal{C}^2(\re)$ (the space of real-valued functions on $\re$ with continuous second derivative), $(\log f(x))'$ is Lipschitz continuous and $\E[((\log f(X))')^4]<\infty$, where the expectation is computed with respect to $f$. This last condition can be replaced by $\E[(f''(X)/f(X))^2]<\infty$, which is slightly stronger. We additionally assume that there exists a constant $A^*\geq 1$ such that
$$
 0<\frac{f}{q}\leq A^* \text{\quad and therefore \quad} \frac{1}{A^*}\leq \frac{q}{f}<\infty.
$$
This condition corresponds to that required for the rejection sampling me\-thod. It ensures that the tails of $q$ are at least as heavy as those of $f$.  A small value for the constant $A^*$ means that $q$ is similar to $f$, and therefore, that it is a good choice of proposal distribution. Note that, when we can directly sample from $f$, we can set $q=f$.

The distribution of $K$, which is $p_n$, also fulfils some conditions. The problem here is that $K$ is a discrete random variable, but the goal is to establish the weak convergence of its associated stochastic process towards a diffusion. A natural way of taking a step in this direction is to assume that the distribution of a suitable transformation of $K$ converges towards one having a PDF. In other words, it is to assume that, in the limit, this transformation of $K$ is a continuous random variable.
% We also have to consider that, contrarily to the acceptance probability for updating the parameters, it is not possible to use the Law of Large Numbers to obtain the convergence of the acceptance probabilities for switching models, which would facilitate establishing the weak convergence. The following construction allows, when combined with additional structure on the PMF $g$, to reach our goal.

Although we could study various distributions $p_n$, we focus on the case where the transformation of $K$ is asymptotically distributed as a standard normal. This represents a natural model to consider that allows obtaining explicit solutions to the optimal tuning problem. The theoretical and experimental results of Sections~\ref{sec_result_weak} and \ref{sec_simulation_weak} shall then be expounded under that framework; it is however underlined in Section~\ref{sec_problem_weak} that the results hold in greater generality, for a larger class of probability mass functions $p_n$.
% It provides an example of a situation where the stochastic process associated to the transformation of $K$ converges towards a diffusion having a standard normal as a stationary distribution. We believe it is the natural situation to consider for obtaining explicit solutions to the optimal tuning problem, providing useful insights. This construction is used in Section~\ref{sec_simulation_weak} to illustrate the practical implications of the weak convergence results presented in Section~\ref{sec_problem_weak} in Theorem~\ref{main_result}. We also explain in Section~\ref{sec_problem_weak} that these weak convergence results hold for a larger class of PMF and provide the conditions.}

We assume that $p_n$ has its mode in the middle of the set $\{1,\ldots,\lfloor \sqrt{n}\log n\rfloor\}$ and  is symmetric with respect to this mode. Two distinct cases thus have to be considered: when $\lfloor \sqrt{n}\log n\rfloor$ is even or odd. When $\lfloor \sqrt{n}\log n\rfloor$ is odd, the mode is $(\lfloor \sqrt{n}\log n\rfloor+1)/2$ and we assume that
\begin{align*}
  &p_n(k+1)=a_{k,n}p_n(k), \quad  k\in\{(\lfloor \sqrt{n}\log n\rfloor+1)/2,\ldots,\lfloor \sqrt{n}\log n\rfloor-1\}, \cr
  &p_n(k-1)=a_{k-1,n}p_n(k), \quad  k\in\{2,\ldots,(\lfloor \sqrt{n}\log n\rfloor+1)/2\},
\end{align*}
where $a_{k,n}:=(1-b_{k,n}/\sqrt{n})$ with
$$
 b_{k,n}:=\abs{\frac{k-\lfloor \sqrt{n}\log n\rfloor/2}{\sqrt{n}}}.
$$
Note that $a_{k,n}$ decreases with the distance between $k$ and the mode. This distribution is symmetric with respect to $(\lfloor \sqrt{n}\log n\rfloor+1)/2$ and is such that
 \begin{align*}
  p_n\left(\frac{\lfloor \sqrt{n}\log n\rfloor+1}{2}+k\right)&=p_n\left(\frac{\lfloor \sqrt{n}\log n\rfloor+1}{2}-k\right) =p_n\left(\frac{\lfloor \sqrt{n}\log n\rfloor+1}{2}\right)\prod_{i=1}^k\left(1-\frac{i-1/2}{n}\right),
 \end{align*}
 where $k\in\left\{1,\ldots,(\lfloor \sqrt{n}\log n\rfloor-1)/2\right\}$.

When $\lfloor \sqrt{n}\log n\rfloor$ is even, the distribution $p_n$ is bimodal with modes at $\lfloor \sqrt{n}\log n\rfloor/2$ and $\lfloor \sqrt{n}\log n\rfloor/2+1$. Using the same definitions as above for $a_{k,n}$ and $b_{k,n}$, we assume that
\begin{align*}
  &p_n(k+1)=a_{k,n}p_n(k), \quad  k\in\{\lfloor \sqrt{n}\log n\rfloor/2+1,\ldots,\lfloor \sqrt{n}\log n\rfloor-1\}, \cr
  &p_n(k-1)=a_{k-1,n}p_n(k), \quad  k\in\{2,\ldots,\lfloor \sqrt{n}\log n\rfloor/2\},
\end{align*}
 which implies that
 \begin{align}\label{eqn_prob_k_even}
  p_n\left(\frac{\lfloor \sqrt{n} \log n\rfloor}{2}+1+k\right)&=p_n\left(\frac{\lfloor \sqrt{n} \log n\rfloor}{2}-k\right)=p_n\left(\frac{\lfloor \sqrt{n} \log n\rfloor}{2}\right)\prod_{i=1}^k \left(1-\frac{i}{n}\right),
 \end{align}
 where $k\in\{1,\ldots,\lfloor \sqrt{n}\log n\rfloor/2-1\}$.

 The function $p_n(k)$ decreases at an exponential rate which is bounded below by $1/2$ when the distance between $k$ and the mode increases. The ratios $p_n(k+1)/p_n(k)$ and $p_n(k-1)/p_n(k)$ are indeed essentially bounded below by $1/2$. The case $\lfloor \sqrt{n}\log n\rfloor=5$ ($n=7$), for instance, is such that the ``best'' model has $n+(\lfloor \sqrt{n}\log n\rfloor+1)/2=10$ parameters, because the mode of $p_n$ is $K=3$ and the models have $n+K$ parameters, and the model with an additional parameter is less appropriate (so is the model with one less parameter), in the sense that $p_n(4)/p_n(3)=p_n(2)/p_n(3)=0.93$. The more parameters we add (or withdraw), the less appropriate the models are. Although $p_n$ is a pure construction, similar structures may arise in situations in which the models can be ranked by number of parameters and the posterior reflects the existence of a balance between overfitting (which involves a lot of parameters) and stability, in the spirit of Occam's razor. This is the case for instance in principal component regression. In this situation, the model indicator $K$ represents the number of components included in the model (i.e.\ Model $K=k$ is the model with the first $k$ components). It is easy to imagine situations where it would be optimal to include some, but not all, components. This would result in posterior distributions for the various models with structures as that described above (see, e.g., the real data analysis in \cite{gagnon2017robustPCR}). Note that, when $p_n$ has such a structure, it makes it easier for the RJ algorithm to explore the entire state space. Indeed the ratios $p_n(k+1)/p_n(k)$ and $p_n(k-1)/p_n(k)$ are never very small, which facilitates the transitions (see (\ref{eqn_prob_add}) and (\ref{eqn_prob_remove})).

%  The hypothesised mathematical structure of the PMF $p$ allows to obtain theoretical results, as (see Proposition~\ref{prop_z_1_to_normal} in Section~\ref{sec_problem_weak} for the formal statement)
%   \begin{align}\label{eqn_conv_K}
%   \Prob\left(\frac{K-\frac{\lfloor \sqrt{n} \log n\rfloor}{2}}{\sqrt{n}}\leq x \right)\rightarrow \Phi(x),\forall x\in\re, \text{ as } n\rightarrow \infty,
%   \end{align}
%   where $K\sim p$ and $\Phi$ is the cumulative distribution function of the standard normal. It can be proved that, for all values of $k$ such that $b_{k,n}$ is well-defined, $b_{k,n}\leq \log(n)/2$ and, therefore, that $b_{k,n}/\sqrt{n}\rightarrow 0$ as $n\rightarrow\infty$, which implies that $a_{k,n}\rightarrow 1$ as $n\rightarrow\infty$. Moreover, for all $n$ and for all values of $k$ such that $b_{k,n}$ is well-defined, $0<b_{k,n}/\sqrt{n}\leq 1/2$, which implies that $1/2\leq a_{k,n}<1$. Thus, when $k\in\{1,\ldots,\lfloor \sqrt{n}\log n\rfloor-1\}$, we have that $1/2\leq p(k+1)/p(k)\leq 2$ and $p(k+1)/p(k)\rightarrow 1$ as $n\rightarrow\infty$, because this ratio is essentially equal to $a_{k,n}$ or $a_{k,n}^{-1}$. To summarise, the assumptions on $p$ and the standardisation of $K$ in (\ref{eqn_conv_K}) are such that the resulting random variable is continuous and takes values on the real line, in the limit, which makes this convergence in distribution possible. Note that the assumptions on $p$ indeed imply that $p(k)>0$ for all $k\in\{1,\ldots,\lfloor \sqrt{n}\log n\rfloor\}$.

 Finally, we add structure on the PMF $g$. Thanks to the Law of Large Numbers, the acceptance probability for updating the parameters (see (\ref{eqn_prob_update})) is (in some sense) asymptotically constant. It is however not possible to take advantage of the Law of Large Numbers to obtain the convergence of the acceptance probabilities for switching models. Adding structure on the PMF $g$, which represents the last required step towards the weak convergence result, will ease handling of these acceptance probabilities.
% It will ease handling of the acceptance probabilities for switching models, because, as mentioned above, it is not possible to take advantage of the Law of Large Numbers in this case. This represents the last required step towards the weak convergence result.
We consider that the function $g$ is as follows:
 \begin{align}\label{def_g}
  g(j):=\begin{cases}
        \tau \text{ if } j=1, \cr
        (1-\tau)A/(A+1) \text{ if } j=2, \cr
        (1-\tau)/(A+1) \text{ if } j=3,
       \end{cases}
 \end{align}
 where $0<\tau<1$ is a constant and $A:=2A^*$. The acceptance probability associated with the inclusion of an extra parameter (see (\ref{eqn_prob_add})) becomes the minimum between 1 and $f(U)/q(U)\times 1/A\times p_n(K+1)/p_n(K)$. By assumption, $2f/q\leq A$ and $p_n(K+1)/p_n(K)\leq 2$; therefore, this acceptance probability is simply $f(U)/q(U)\times 1/A\times p_n(K+1)/p_n(K)$. Furthermore, the acceptance probability associated with the withdrawal of the last parameter (see (\ref{eqn_prob_remove})) becomes the minimum between 1 and $q(X_{n+K}^{K})/f(X_{n+K}^{K})\times A\times p_n(K-1)/p_n(K)\geq 1$, which means that this type of movement is automatically accepted (whenever it is possible to withdraw a parameter, i.e.\ when $K>1$). Therefore, the probability to propose and accept switches from Model $K$ to Model $K-1$ is $(1-\tau)/(A+1)$. Proposition~\ref{prop_prob_acc_ajout} indicates that this is asymptotically equal to the ``average'' probability to propose and accept switches from Model $K$ to Model $K+1$. The probability that $K(m)$ increases by one is therefore the same as that of decreasing by one, globally (and asymptotically).
  \begin{prop}\label{prop_prob_acc_ajout}
  Consider the assumptions and the RJ algorithm described in this section. If we assume that $(K(0),\mathbf{X}^K(0))\sim\pi_n$, then for all $m\in\na$,
  $$
   \E\left[1 \wedge \frac{f(U(m+1))p_n(K(m)+1)g(3)}{q(U(m+1))p_n(K(m))g(2)}\right]\rightarrow \frac{1}{A} \text{ as } n\rightarrow\infty.
  $$
 \end{prop}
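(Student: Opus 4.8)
The plan is to exploit the special form of $g$ to discard the minimum with $1$, to reduce the expectation to a product of two elementary expectations by independence, and then to show that the one factor which is not identically $1/A$ tends to $1/A$.

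First I would record the two bounds the assumptions provide. Since $2f/q\le A$, we have $f(u)/(A\,q(u))\le 1/2$ for every $u$. Since $a_{k,n}=1-b_{k,n}/\sqrt n$ with $b_{k,n}\le\lfloor\sqrt n\log n\rfloor/(2\sqrt n)\le(\log n)/2\le\sqrt n/2$, we have $a_{k,n}\ge 1/2$, hence $p_n(k+1)/p_n(k)\le 2$ for every $k$ in the support of $p_n$ (with the convention $p_n(\lfloor\sqrt n\log n\rfloor+1)=0$). Because $g(3)/g(2)=1/A$, the ratio inside the minimum in the statement equals $\frac1A\cdot\frac{f(U(m+1))}{q(U(m+1))}\cdot\frac{p_n(K(m)+1)}{p_n(K(m))}$, which is at most $\tfrac12\cdot 2=1$ (and equals $0$ on the boundary event $K(m)=\lfloor\sqrt n\log n\rfloor$). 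Hence the minimum is superfluous and
\begin{equation*}
1\wedge\frac{f(U(m+1))p_n(K(m)+1)g(3)}{q(U(m+1))p_n(K(m))g(2)}=\frac1A\cdot\frac{f(U(m+1))}{q(U(m+1))}\cdot\frac{p_n(K(m)+1)}{p_n(K(m))}.
\end{equation*}

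Next, under $(K(0),\mathbf X^K(0))\sim\pi_n$ the chain is stationary, so $K(m)\sim p_n$; moreover $U(m+1)\sim q$ is drawn independently of $K(m)$, and $\mathbf X^K(m)$ does not enter the expression. Taking expectations and factoring,
\begin{equation*}
\E\!\left[1\wedge\frac{f(U(m+1))p_n(K(m)+1)g(3)}{q(U(m+1))p_n(K(m))g(2)}\right]=\frac1A\,\E\!\left[\frac{f(U(m+1))}{q(U(m+1))}\right]\E\!\left[\frac{p_n(K(m)+1)}{p_n(K(m))}\right].
\end{equation*}
The first expectation is $\int_{\re}(f/q)\,q=\int_{\re}f=1$. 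For the second, summing against $p_n$ telescopes: after re-indexing and using $p_n(\lfloor\sqrt n\log n\rfloor+1)=0$, one gets $\sum_{k}\frac{p_n(k+1)}{p_n(k)}p_n(k)=\sum_{k=2}^{\lfloor\sqrt n\log n\rfloor}p_n(k)=1-p_n(1)$. Thus the left-hand side is $(1-p_n(1))/A$, and it remains to show $p_n(1)\to0$.

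For that I would use the closed forms for $p_n$. In both parities $p_n(1)$ equals the mode value of $p_n$ times a product $\prod_{i=1}^{M_n}(1-c_i/n)$ with $c_i\ge i-1/2$ and $M_n=\Theta(\sqrt n\log n)$ (namely $M_n=(\lfloor\sqrt n\log n\rfloor-1)/2$ when $\lfloor\sqrt n\log n\rfloor$ is odd and $M_n=\lfloor\sqrt n\log n\rfloor/2-1$ when it is even; all factors are positive since $M_n<n$ for $n\ge 7$). Since the mode value is at most $1$ and $\log(1-x)\le-x$ on $[0,1)$,
\begin{equation*}
p_n(1)\le\exp\!\left(-\frac1n\sum_{i=1}^{M_n}\Bigl(i-\tfrac12\Bigr)\right)=\exp\!\left(-\frac{M_n^{2}}{2n}\right),
\end{equation*}
and $M_n^{2}/(2n)\sim(\log n)^2/8\to\infty$, so $p_n(1)\to0$; assembling the pieces gives the limit $1/A$. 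The only real difficulty is bookkeeping: one must check $p_n(k+1)/p_n(k)\le 2$ uniformly in $k$ and in $n\ge 7$ — exactly the content of the estimate $b_{k,n}\le\sqrt n/2$ — and treat the upper boundary $k=\lfloor\sqrt n\log n\rfloor$ by hand; everything else is the telescoping sum and the elementary tail bound $p_n(1)\le\exp(-M_n^{2}/(2n))$ with $M_n$ of order $\sqrt n\log n$.
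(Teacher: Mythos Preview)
Your argument is correct and follows essentially the same route as the paper: drop the minimum using $g(3)/g(2)=1/A$, $f/(Aq)\le 1/2$ and $p_n(k+1)/p_n(k)\le 2$; factor by independence; use $\E[f(U)/q(U)]=1$; telescope to get $(1-p_n(1))/A$; and conclude via $p_n(1)\to 0$. The only difference is packaging: the paper cites its Section~\ref{sec_proof_finite_dim} for removing the minimum and its Proposition~\ref{prop_bound_dom} for $p_n(1)\to 0$, whereas you verify both inline (and only prove the weaker $p_n(1)\to 0$ rather than $n^\rho p_n(1)\to 0$, which is all that is needed here).
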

 \begin{proof} See Section~\ref{sec_proofs_props}. \end{proof}
 %We believe that the tuning rule for the proposal distribution $g$ mentioned in the introduction (that will be described in detail in Section~\ref{sec_optimal_impl}) is valid when the structure of $g$ is different than that above. What allows us to establish our weak convergence results is that the acceptance probabilities for switching models are (roughly) constant.

\section{Towards Optimal Implementation of the RJ algorithm}\label{sec_result_weak}

In order to implement the RJ algorithm described in Section~\ref{sec_context_weak}, we have to specify the PDF $q$ and values for the constants $A$, $\tau$ and $\ell$. In Section~\ref{sec_problem_weak}, we present weak convergence results that are used in Section~\ref{sec_optimisation} to find asymptotically optimal (as $n\rightarrow \infty$) values for $\tau$ and $\ell$. This is what allows deriving the rules for tuning the RJ algorithm. %In Section~\ref{sec_optimal_impl}, we provide guidelines to suitably design $q$, which implicitly allows to determine the constant $A$.

\subsection{Weak Convergence Results}\label{sec_problem_weak}
 In order to study the asymptotic behaviour of the algorithm, we consider the following rescaled stochastic process:
 \begin{align}\label{def_Z}
  &\mathbf{Z}^n(t):= \left(\frac{K(\lfloor nt \rfloor )-\lfloor \sqrt{n}\log n \rfloor/2}{\sqrt{n}},\mathbf{X}^{K(\lfloor nt \rfloor )}(\lfloor nt \rfloor )\right),
 \end{align}
 where $t\geq0$. The continuous-time stochastic process $\{\mathbf{Z}^n(t): t\geq0\}$ is a sped up and modified version of $\{(K(m),\mathbf{X}^K(m)): m\in\na\}$. In any given iteration, the average jump size of the parameters
$$
 \mathbf{X}^{K(\lfloor nt \rfloor )}(\lfloor nt \rfloor ):=\left(X_1^{K(\lfloor nt \rfloor)}(\lfloor nt \rfloor),\ldots,X_{n+K(\lfloor nt \rfloor)}^{K(\lfloor nt \rfloor)}(\lfloor nt \rfloor)\right)
$$
 decreases with $n$ because the variance of the random walk is proportional to  $1/(n+{K(\lfloor nt \rfloor)}))$. The jump size of $\{K(\lfloor nt \rfloor )/\sqrt{n}: t\geq 0\}$ also decreases with $n$. In fact, each time it moves, its jump size is $1/\sqrt{n}$. The decreasing size of the jumps, combined with the acceleration of $\{\mathbf{Z}^n(t): t\geq0\}$, result in a continuous and non-trivial limiting process. We subtract $\lfloor \sqrt{n}\log n \rfloor/(2\sqrt{n})$ from $\{K(\lfloor nt \rfloor )/\sqrt{n}: t\geq 0\}$ in order to obtain a limiting process with components that take values on the real line. In particular, the asymptotic stationary distribution of this component of $\{\mathbf{Z}^n(t): t\geq0\}$, that we denote $\{Z_1^n(t):t\geq 0\}$, is a standard normal, as stated in Proposition~\ref{prop_z_1_to_normal}.

  \begin{prop}\label{prop_z_1_to_normal}
   Consider the assumptions described in Section~\ref{sec_context_weak}, the stochastic process $\{\mathbf{Z}^n(t): t\geq0\}$ defined in (\ref{def_Z}), and assume that $\mathbf{Z}^n(0)\sim\pi_n$. Then, as $n\rightarrow\infty$, $Z_1^n(t)$ converges in distribution towards a standard normal random variable, for all $t\geq 0$.
 \end{prop}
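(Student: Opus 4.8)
The plan is to exploit stationarity to reduce the claim to a statement about the marginal law of $K$, and then to recognise $p_n$ as a truncated, discretised Gaussian with variance $n$.

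First I would note that, since $(K(0),\mathbf{X}^K(0))\sim\pi_n$ and the chain is reversible, hence stationary, with respect to $\pi_n$, the pair $(K(\lfloor nt\rfloor),\mathbf{X}^K(\lfloor nt\rfloor))$ has law $\pi_n$ for every $t\geq0$, so that $K(\lfloor nt\rfloor)$ has marginal law $p_n$, independently of $t$. It therefore suffices to show that, if $K\sim p_n$, then $(K-\lfloor\sqrt{n}\log n\rfloor/2)/\sqrt{n}$ converges in distribution to a standard normal. Since the mode $M$ of $p_n$ (or, in the even case, the midpoint $M$ of its two modes) differs from $\lfloor\sqrt{n}\log n\rfloor/2$ by at most $1/2$, hence by $O(1/\sqrt{n})$ after rescaling, Slutsky's theorem allows us to work with $(K-M)/\sqrt{n}$ instead.

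The key estimate is the Gaussian approximation of $p_n$ near $M$. Writing $\theta\in\{0,1/2\}$ according to the parity of $\lfloor\sqrt{n}\log n\rfloor$, the symmetric product representations of Section~\ref{sec_context_weak} give, for $k\geq0$,
\[
 \log\frac{p_n(M+k)}{p_n(M)}=\log\frac{p_n(M-k)}{p_n(M)}=\sum_{i=1}^{k}\log\!\left(1-\frac{i-\theta}{n}\right).
\]
A Taylor expansion of $\log(1-u)$ shows this equals $-k^2/(2n)+O(k/n+k^3/n^2)$, so that for each fixed $B>0$,
\[
 p_n(M+k)=p_n(M)\,e^{-k^2/(2n)}\,(1+o(1))\quad\text{uniformly over }|k|\leq B\sqrt{n},
\]
as $n\to\infty$; moreover, since $\log(1-u)\leq-u$, the bound $p_n(M+k)\leq p_n(M)\,e^{-k^2/(2n)}$ holds globally. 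For the normalising constant, the crude lower bound $\prod_{i=1}^{|k|}(1-(i-\theta)/n)\geq(1-1/\sqrt{n})^{\sqrt{n}}\to e^{-1}$, valid for $|k|\leq\sqrt{n}$, together with the global upper bound, sandwiches $\sum_k\prod_{i=1}^{|k|}(1-(i-\theta)/n)$ between constant multiples of $\sqrt{n}$, whence $p_n(M)=\Theta(1/\sqrt{n})$; combining the uniform approximation on $|k|\leq B\sqrt{n}$ (a Riemann sum of mesh $1/\sqrt{n}$) with the global upper bound to control the tail $|k|>B\sqrt{n}$, and letting $B\to\infty$, upgrades this to $\sqrt{n}\,p_n(M)\to 1/\sqrt{2\pi}$ — here one uses that the range of $k$ is of order $\sqrt{n}\log n$, so the truncation of the state space is asymptotically negligible.

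Finally I would conclude by a direct computation of the cumulative distribution function: for fixed reals $a<b$,
\[
 \Prob\!\left(a<\frac{K-M}{\sqrt{n}}\leq b\right)=\sum_{a\sqrt{n}<k\leq b\sqrt{n}}p_n(M+k)=(1+o(1))\,\sqrt{n}\,p_n(M)\cdot\frac{1}{\sqrt{n}}\!\!\sum_{a\sqrt{n}<k\leq b\sqrt{n}}\!\!\!e^{-k^2/(2n)},
\]
and the last sum is a Riemann sum of mesh $1/\sqrt{n}$ converging to $\int_a^b e^{-x^2/2}\,dx$, so the whole expression tends to $\frac{1}{\sqrt{2\pi}}\int_a^b e^{-x^2/2}\,dx=\Prob(a<N(0,1)\leq b)$. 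Since $a,b$ are arbitrary and the standard normal CDF is continuous, this yields $(K-M)/\sqrt{n}$ converging in distribution to $N(0,1)$, and hence the proposition. The only slightly delicate points are the uniformity of the error term over $|k|\leq B\sqrt{n}$ and the accompanying tail estimate feeding into the normalising constant; everything else is routine, and in particular the parity distinction between the odd and even cases is immaterial at the level of the leading-order term $-k^2/(2n)$.
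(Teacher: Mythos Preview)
Your proof is correct and follows essentially the same route as the paper's: reduce to the marginal law of $K$ by stationarity, use the explicit product representation of $p_n$ to obtain the Gaussian approximation $e^{-k^2/(2n)}$, and pass to the limit via a Riemann-sum argument. The packaging differs only cosmetically --- the paper obtains the lower bound on $\prod_i(1-i/n)$ via the inequality $e^{-(1+\delta)x}\leq 1-x$ for small $x$ rather than a Taylor expansion, and carries out the Riemann-sum limit through Lebesgue's dominated convergence theorem rather than an explicit truncation at $|k|\leq B\sqrt{n}$ --- but the substance is the same.
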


 \begin{proof} See Section~\ref{sec_proofs_props}. \end{proof}

 The main result is now stated.

 \begin{thm}\label{main_result}
  Consider the assumptions and the RJ algorithm described in Section~\ref{sec_context_weak}. Consider the stochastic process $\{\mathbf{Z}^n(t): t\geq0\}$ defined in (\ref{def_Z}) and assume that $\mathbf{Z}^n(0)\sim\pi_n$. Then, as $n\rightarrow\infty$, the first two components of $\{\mathbf{Z}^n(t):t\geq0\}$ converge weakly towards a bidimensional Langevin diffusion, i.e.\
  $$
  \{\mathbf{Z}_{1,2}^n(t): t\geq 0\}:=\{(Z_1^n(t),Z_2^n(t)): t\geq 0\}\Rightarrow \{\mathbf{Z}(t): t\geq 0\} \text{ as } n\rightarrow\infty,
  $$
  where the process $\{\mathbf{Z}(t): t\geq 0\}$ is comprised of two independent components such that $Z_1(0)\sim \mathcal{N}(0,1)$, $Z_2(0)\sim f$, and
  $$
  dZ_1(t)=-(1-\tau)/(A+1)\times Z_1(t)dt+\sqrt{2(1-\tau)/(A+1)}dB_1(t),
 $$
 $$
  dZ_2(t)=\tau \ell^2 \Phi(-\ell\sqrt{\Upsilon}/2)(\log f(Z_2(t)))' dt+\sqrt{2\tau \ell^2\Phi(-\ell\sqrt{\Upsilon}/2)}dB_2(t),
 $$
 with $\{B_1(t): t\geq0\}$ and $\{B_2(t): t\geq0\}$ being two independent Wiener processes, and $\Upsilon:=\E\left[(\log f(Z_2(0))')^2\right]$.
 \end{thm}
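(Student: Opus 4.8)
\emph{Proof idea.} The plan is to run the generator-convergence argument of \cite{roberts1997weak}, now for the two-component functional $\mathbf{Z}_{1,2}^n=(Z_1^n,Z_2^n)$ of the full Markov chain $\{(K(m),\mathbf{X}^K(m))\}$. Since $\mathbf{Z}_{1,2}^n$ is itself not Markov, I would work with the full chain: for a test function $V\in\mathcal{C}_c^\infty(\re^2)$, write $G_nV$ for $n$ times the one-step increment $\E[\,V(\mathbf{Z}_{1,2}^n(1/n))\mid\mathbf{Z}^n(0)=(k,\mathbf{x}^k)\,]-V$, regarded as a function on the full state space, and show $G_nV\to GV$ uniformly over states in a set $F_n$ with $\pi_n(F_n)\to1$, where $G=G^{(1)}+G^{(2)}$ and $G^{(1)}$, $G^{(2)}$ are the generators of, respectively, the Ornstein--Uhlenbeck process $Z_1$ and the Langevin diffusion $Z_2$ of the statement. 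The additive form of $G$, hence the asymptotic independence of the two components, is forced by the algorithm: the coordinate $X_1$ is altered only by the parameter update of Step~3(a), whereas a model switch (Steps~3(b)--3(c)) appends or deletes the \emph{last} coordinate and so leaves $X_1$ untouched, while $K$ is altered only by Steps~3(b)--3(c); thus, conditioning on the move type selected by $U$, the one-step increment of $V$ splits exactly into a ``$z_1$-part'' (model switch, $z_2$ frozen) plus a ``$z_2$-part'' (update, $z_1$ frozen), with no mixed term.

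For the $z_1$-part, fix $z_1$ in a compact set and let $k=\lfloor\sqrt{n}\log n\rfloor/2+z_1\sqrt{n}+O(1)$ be the corresponding model index; for $n$ large this $k$ sits far from both endpoints $1$ and $\lfloor\sqrt{n}\log n\rfloor$ of the support, so the $K=1$ restriction and the even/odd distinction are immaterial. By the assumptions $2f/q\le A$ and $p_n(k+1)/p_n(k)\le2$, the acceptance ratio in (\ref{eqn_prob_add}) is never truncated, so $K$ increases by one with probability $g(2)\,\E_q[(f(U)/q(U))(1/A)p_n(k+1)/p_n(k)]=\frac{1-\tau}{A+1}\,p_n(k+1)/p_n(k)$ and decreases by one with probability $g(3)=\frac{1-\tau}{A+1}$ (for $k>1$). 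The construction of $p_n$ gives $p_n(k+1)/p_n(k)=1-z_1/\sqrt{n}+O(1/n)$ uniformly on compacts (on either side of the mode the ratio is $a_{k,n}=1-b_{k,n}/\sqrt{n}$ or $1/a_{k,n}$, with $b_{k,n}=|z_1|$). Since $Z_1^n$ moves by $\pm1/\sqrt{n}$, multiplying by $n$ yields $n\,\E[\Delta Z_1^n\mid k]\to-\frac{1-\tau}{A+1}z_1$, $n\,\E[(\Delta Z_1^n)^2\mid k]\to\frac{2(1-\tau)}{A+1}$ and $n\,\E[|\Delta Z_1^n|^3\mid k]=O(n^{-1/2})$, i.e.\ $G_n^{(1)}V\to G^{(1)}V$. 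This is the only genuinely new computation and it is short (consistent with Proposition~\ref{prop_prob_acc_ajout}, which is the same expansion averaged over $k\sim p_n$).

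For the $z_2$-part, conditioning on $K(m)=k$ reduces Step~3(a) to exactly the random-walk-Metropolis setting of \cite{roberts1997weak} with $n+k$ conditionally i.i.d.\ coordinates, proposal variance $\ell^2/(n+k)$, and one attempt occurring with probability $\tau$. I would reuse that analysis: Taylor-expand $\log\{f(Y_i)/f(X_i)\}$, apply the CLT to show $\sum_{i=1}^{n+k}\log\{f(Y_i)/f(X_i)\}$ is asymptotically $\mathcal{N}(-\tfrac{\ell^2}{2}\Upsilon_{n,k},\ell^2\Upsilon_{n,k})$ with $\Upsilon_{n,k}=\tfrac{1}{n+k}\sum_i((\log f(X_i))')^2$, then evaluate $\E[(Y_1-X_1)\ind_{\mathrm{acc}}]$ and $\E[(Y_1-X_1)^2\ind_{\mathrm{acc}}]$; multiplying by $n$ and by $\tau$, and using $n/(n+k)\to1$ uniformly over $k\le\lfloor\sqrt{n}\log n\rfloor$, gives $G_n^{(2)}V\to G^{(2)}V$. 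All of this must hold uniformly over a set $F_n$ of full-chain states on which the relevant empirical averages are close to their means (in particular $\Upsilon_{n,k}\approx\Upsilon$, together with the control quantities built from the Lipschitz bound on $(\log f)'$ and from $\E[((\log f(X))')^4]<\infty$); since $\mathbf{Z}^n(0)\sim\pi_n$, the $X_i$ are conditionally i.i.d.\ $f$ and $n+K\ge n+1\to\infty$, the weak law of large numbers gives $\pi_n(F_n)\to1$.

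Finally I would assemble these estimates by the standard Ethier--Kurtz-type weak-convergence machinery for Markov processes, exactly as in \cite{roberts1997weak}: the limiting martingale problem for $G$ is well posed because $G$ decomposes into a one-dimensional Ornstein--Uhlenbeck generator and a Langevin generator for $f$, both of which generate unique diffusions under the assumed smoothness ($f\in\mathcal{C}^2(\re)$, $(\log f)'$ Lipschitz); the compact-containment/tightness requirement follows from stationarity, $\mathbf{Z}^n(t)\sim\pi_n$ for every $t\ge0$, so $Z_1^n(t)$ and $Z_2^n(t)$ are tight, combined with the above generator bounds; and the initial law converges to the product $\mathcal{N}(0,1)\otimes f$, since under $\pi_n$ the coordinate $Z_2^n(0)=X_1^K$ has law $f$ exactly and is independent of $K$, while $Z_1^n(0)\Rightarrow\mathcal{N}(0,1)$ is Proposition~\ref{prop_z_1_to_normal} ($\mathcal{N}(0,1)$ being indeed the stationary law of the limiting $Z_1$). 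The main obstacle is the uniform control, over $F_n$, of the remainder terms in the $z_2$-analysis — the technical core of \cite{roberts1997weak} — now with the extra bookkeeping that the number of coordinates $n+K$ is random and coupled to $Z_1$; this coupling is harmless because one may condition on $K$ and the required estimates are uniform in $K\in\{1,\dots,\lfloor\sqrt{n}\log n\rfloor\}$.
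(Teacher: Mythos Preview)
Your proposal is correct and follows essentially the same route as the paper: decompose the one-step generator by move type so that the add/delete steps produce the Ornstein--Uhlenbeck generator for $Z_1$ (via the expansion $p_n(K+1)/p_n(K)=1-R^K/\sqrt{n}+o(1/\sqrt{n})$ combined with a Taylor expansion of $h(R^K\pm 1/\sqrt{n},X_1^K)$) while the parameter-update step produces the Langevin generator for $Z_2$ exactly as in \cite{roberts1997weak}, and then invoke the Ethier--Kurtz machinery. The only cosmetic difference is that the paper packages the generator convergence directly as the $L^1$ statement $\E\bigl[|\varphi_n(t)-Gh(\mathbf{Z}_{1,2}^n(t))|\bigr]\to0$ (condition~(c) of Theorem~8.2, Chapter~4 of \cite{ethier1986markov}) rather than via your typical-set $F_n$ formulation; since $|R^K|\le(\log n)/2$ deterministically, the two packagings coincide here.
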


 \begin{proof} See Section~\ref{sec_proof_thm_1}. \end{proof}

 The notation ``$\Rightarrow$'' represents weak convergence (or convergence in distribution) of processes in the Skorokhod topology (for more details about this type of convergence, see Section 3 of \cite{ethier1986markov}).

 \begin{remark}
  The assumptions on the distribution of $K$ in Section~\ref{sec_context_weak} can be relaxed, under which case a result analogous to Theorem~\ref{main_result} holds. In particular, consider the transformed random variable $R^K$ defined as $R^K:=(K-\phi(n))/\sqrt{n}$ and suppose that the function $\phi(n)$ can be chosen such that
  \begin{align}\label{eqn_gen_cond}
   \frac{1}{p_n(K)}\frac{p_n(K+1)-p_n(K)}{1/\sqrt{n}}-(\log f_{Z_1}(R^K))'\rightarrow 0 \text{ \quad with probability 1 as $n\rightarrow\infty$},
  \end{align}
  where $f_{Z_1}$ is a strictly positive and smooth enough PDF. In that case, the first component of $\{\mathbf{Z}(t): t\geq 0\}$ is such that $Z_1(0)\sim f_{Z_1}$ and
 $$
  dZ_1(t)=(1-\tau)/(A+1) (\log f_{Z_1}(Z_1(t)))'dt+\sqrt{2(1-\tau)/(A+1)}dB_1(t).
 $$
 Given that the PMF $p_n$ of $K$ also is the PMF of $R^K$, and using the fact that $R^{K+1}-R^K=1/\sqrt{n}$, then the left term in (\ref{eqn_gen_cond}) can be seen as the discrete version of the derivative of $\log p_n$. One can thus analyse this term to identify the limiting process $\{\mathbf{Z}(t): t\geq 0\}$. Note that under the assumptions on $p_n$ stated in Section~\ref{sec_context_weak},
   \begin{align*}
   \frac{1}{p_n(K)}\frac{p_n(K+1)-p_n(K)}{1/\sqrt{n}}+R^K\rightarrow 0 \text{ \quad with probability 1 as $n\rightarrow\infty$},
  \end{align*}
 implying that $(\log f_{Z_1}(x))'=-x$, and therefore, that $f_{Z_1}=\mathcal{N}(0,1)$.
 \end{remark}

 \subsection{Optimisation}\label{sec_optimisation}

 The sample paths of $\{\mathbf{Z}(t): t\geq 0\}$ depend on $\tau, A,\ell$, $\Upsilon$ and $f$. In this section, we find values for $\ell$ and $\tau$ that are such that, for given $A$ and $f$ (and therefore $\Upsilon$), the state space exploration of this stochastic process is optimal. Optimising the asymptotic state space exploration of $\{(K(\lfloor nt \rfloor ),X_1^{K(\lfloor nt \rfloor)}(\lfloor nt \rfloor)): t\geq 0\}$ is sufficient to optimise that of $\{(K(\lfloor nt \rfloor ),\mathbf{X}^{K(\lfloor nt \rfloor)}(\lfloor nt \rfloor)): t\geq 0\}$. Indeed, in addition to optimising the exploration of the model space, we optimise the exploration of the first parameter space, and all parameters of all models share the same behaviour. During the theoretical optimisation, the constant $A$ is considered to be fixed because its value cannot be arbitrarily chosen. Indeed, it is tied to the ratio $f/q\leq A^*=A/2$. Note that the PDF $q$ only has an impact on the sample paths of $\{\mathbf{Z}(t): t\geq 0\}$ through the constant $A$.

 We first optimise the algorithm with respect to $\ell$. The stochastic process $\{Z_2(t): t\geq 0\}$ can be written as $Z_2(t)=V( 2\tau \ell^2 \Phi(-\ell\sqrt{\Upsilon}/2)\times t)$, where $\{V(t): t\geq 0\}$ is the following Langevin diffusion:
 $$
  dV(t)=(\log f(V(t)))'/2 \times dt+dB_2(t).
 $$
 The term $ 2\tau \ell^2 \Phi(-\ell\sqrt{\Upsilon}/2)$ that multiplies the time index of $\{V(t): t\geq 0\}$ to obtain $\{Z_2(t): t\geq 0\}$ is sometimes called the ``speed measure'' of $\{Z_2(t): t\geq 0\}$. Hereafter, when we discuss about to the ``speed'' of a stochastic process, we refer to this term. Viewed as a function of $\tau$ and $\ell$, the process $\{Z_2(t): t\geq 0\}$ that optimally explores its state space is thus the one with the largest speed. We can optimise the algorithm with respect to $\ell$ by maximising the speed of $\{Z_2(t): t\geq 0\}$ with respect to this variable, because the value of $\ell$ does not have an impact on the sample paths of $\{Z_1(t): t\geq 0\}$ (see the definition of $\{Z_1(t): t\geq 0\}$ in Theorem~\ref{main_result}). The function $2\tau \ell^2 \Phi(-\ell\sqrt{\Upsilon}/2)$ is maximised with respect to $\ell$ by $\ell=2.38/\sqrt{\Upsilon}$, as stated in Corollary 1.2 of \cite{roberts1997weak}. We therefore obtain the same optimal value as these authors. This is not surprising because the considered RJ algorithm updates the parameters in the same way as the RWM algorithm studied by these authors. Furthermore, the conditional distribution of the parameters given any model is essentially the same as their target distribution. In fact, the latter can be seen as a special case of the target that we consider, and their RMW algorithm can be seen as a special case of the considered RJ algorithm. It is thus interesting to observe that their result holds in our context.

 The optimisation with respect to $\ell$ tells us that the most efficient way to update the parameters is to set $\ell=2.38/\sqrt{\Upsilon}$. Therefore, the optimal variance for the random walk is $(2.38^2/(\Upsilon( n+K(m))))\mathcal{I}_{n+K(m)}$. As mentioned in \cite{roberts1997weak}, $\Upsilon$ can be seen as a measure of ``roughness'' of $f$. Consider for instance the case $f=\mathcal{N}(\mu,\sigma^2)$, which implies that $\Upsilon=1/\sigma^2$. In this case, we observe that small values for $\sigma$ correspond to ``rough'' $f$ functions, and vice versa. The ``rougher'' $f$ is, the smaller should be $\ell$. The result under normality also suggests that larger values for $\ell$ should be used when the tails are thicker.

 It could seem necessary to know $\Upsilon:=\E[(\log f(Z_2(0))')^2]$ in order to apply this optimal scaling result. Fortunately, the practical $0.234$ rule provided by \cite{roberts1997weak} can be used, as stated in Corollary~\ref{prop_conv_prob_acc}. This corollary is an adapted version of Corollary 1.2 of \cite{roberts1997weak}. Its proof is similar to the one given by these authors, and is thus omitted (it can nevertheless be found in \cite{gagnon2016thesis}).

 \begin{Corollary}\label{prop_conv_prob_acc}
  Consider the assumptions and the RJ algorithm described in Section~\ref{sec_context_weak}. Assuming that $(K(0),\mathbf{X}^K$ $(0))\sim\pi_n$, then, for all $m\in\na$,
  $$
   \E\left[1 \wedge \prod_{i=1}^{n+K(m)} \frac{f(Y_i^{K(m)}(m+1))}{f(X_i^{K(m)}(m))}\right]\rightarrow 2\Phi(-\ell\sqrt{\Upsilon}/2) \text{ as } n\rightarrow\infty.
  $$
  In addition, setting $\ell=2.38/\sqrt{\Upsilon}$ is equivalent to having $2\Phi(-\ell\sqrt{\Upsilon}/2)=2\Phi(-2.38/2)= 0.234$.
 \end{Corollary}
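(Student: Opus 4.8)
The plan is to establish the limit $\E\bigl[1 \wedge \prod_{i=1}^{n+K(m)} f(Y_i^{K(m)}(m+1))/f(X_i^{K(m)}(m))\bigr]\to 2\Phi(-\ell\sqrt{\Upsilon}/2)$ by reducing the problem to the i.i.d.\ computation carried out by \cite{roberts1997weak} and then passing through Theorem~\ref{main_result} (or its ingredients) to handle the fact that the number of coordinates $n+K(m)$ is itself random. First I would condition on the current state $(K(m),\mathbf{X}^{K(m)}(m))$. Since we assumed $(K(0),\mathbf{X}^K(0))\sim\pi_n$ and the chain is $\pi_n$-reversible, the marginal law of $(K(m),\mathbf{X}^{K(m)}(m))$ is exactly $\pi_n$, so $X_1^{K(m)},\dots,X_{n+K(m)}^{K(m)}$ are conditionally i.i.d.\ from $f$ given $K(m)$, and the proposal increments are i.i.d.\ $\mathcal{N}(0,\ell^2/(n+K(m)))$. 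Writing $N:=n+K(m)$, the inner random variable is
$$
 1 \wedge \exp\!\Bigl(\sum_{i=1}^{N}\bigl[\log f(X_i+\epsilon_i)-\log f(X_i)\bigr]\Bigr),
$$
with $\epsilon_i\sim\mathcal{N}(0,\ell^2/N)$ independent of the $X_i$'s.

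The second step is the classical Taylor expansion: write $\log f(X_i+\epsilon_i)-\log f(X_i)=(\log f)'(X_i)\epsilon_i+\tfrac12(\log f)''(X_i)\epsilon_i^2+R_i$, where the remainder is controlled using the Lipschitz property of $(\log f)'$ and the moment bound $\E[((\log f(X))')^4]<\infty$ (equivalently $\E[(f''/f)^2]<\infty$). Summing over $i$ and using $\E[\epsilon_i^2]=\ell^2/N$, a law of large numbers / Lyapunov CLT argument shows that $\sum_i (\log f)''(X_i)\epsilon_i^2 \to \ell^2\,\E[(\log f(X))''] = -\ell^2\Upsilon$ in probability (using $\E[(\log f)'']=-\E[((\log f)')^2]=-\Upsilon$, which follows from the smoothness and integrability conditions after integration by parts), while $\sum_i (\log f)'(X_i)\epsilon_i$ is asymptotically $\mathcal{N}(0,\ell^2\Upsilon)$ and the remainder terms vanish. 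Hence the exponent converges in distribution to $\mathcal{N}(-\ell^2\Upsilon/2,\ell^2\Upsilon)$, so by the continuous mapping theorem and bounded convergence (the integrand is bounded by $1$) we get $\E[1\wedge e^{W}]\to\E[1\wedge e^{W_\infty}]$ with $W_\infty\sim\mathcal{N}(-\ell^2\Upsilon/2,\ell^2\Upsilon)$; the standard Gaussian computation $\E[1\wedge e^{W_\infty}]=2\Phi(-\ell\sqrt{\Upsilon}/2)$ (as in \cite{roberts1997weak}) closes this part. The final sentence of the corollary is the elementary arithmetic fact that $\ell\sqrt{\Upsilon}/2=2.38/2=1.19$ gives $2\Phi(-1.19)=0.234$.

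The main obstacle is that $N=n+K(m)$ is random and, under $\pi_n$, $K(m)$ is of order $\sqrt{n}\log n$, so one must verify that the above limits hold uniformly enough in $N$ to survive integration over the distribution of $K(m)$. I would handle this by showing that the conditional expectation $\E[1\wedge e^{W}\mid K(m)=k]$ converges to $2\Phi(-\ell\sqrt{\Upsilon}/2)$ uniformly over $k\in\{1,\dots,\lfloor\sqrt n\log n\rfloor\}$ (equivalently, over all admissible $N$ between $n$ and $n+\lfloor\sqrt n\log n\rfloor$), since $N\to\infty$ is the only feature that matters and the rate estimates in the Taylor/CLT step depend on $N$ only through $N\to\infty$ and the fixed moments of $f$; the Berry--Esseen bound for the CLT piece and an $L^1$ bound on the remainder both go to zero uniformly in $N\ge n$. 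Then $\E[1\wedge e^W]=\sum_k p_n(k)\,\E[1\wedge e^W\mid K(m)=k]$ is a convex combination of quantities all converging to the same limit, hence converges to that limit. This is exactly the same structure as the proof of Corollary~1.2 in \cite{roberts1997weak}, with the extra bookkeeping that the effective dimension is $N\ge n$ rather than exactly $n$; accordingly the full details are as in \cite{roberts1997weak} and \cite{gagnon2016thesis}.
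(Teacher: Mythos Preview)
Your proposal is correct and follows essentially the same approach that the paper indicates: the paper states that the proof ``is similar to the one given by [Roberts and Rosenthal (1997)], and is thus omitted (it can nevertheless be found in \cite{gagnon2016thesis}),'' and your argument is precisely the Roberts--Rosenthal computation for the RWM acceptance probability, together with the observation that the random effective dimension $N=n+K(m)\ge n$ only enters through $N\to\infty$ so that the convergence of the conditional expectation is uniform over admissible $k$ and survives averaging over $p_n$. Your handling of the extra bookkeeping (stationarity to get $(K(m),\mathbf{X}^{K(m)}(m))\sim\pi_n$, conditioning on $K(m)$, and the uniform-in-$N$ convergence) is exactly what is needed beyond the original Corollary~1.2 of \cite{roberts1997weak}.
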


 Therefore, in order to reach optimal efficiency with respect to $\ell$, RJ users can monitor the acceptance rate of candidates $\mathbf{Y}^{K(m)}(m+1)$, where $\mathbf{Y}^{K(m)}(m+1)\sim \mathcal{N}(\mathbf{X}^{K(m)}(m),(\ell^2/(n+{K(m)}))\mathcal{I}_{n+K(m)})$, and tune the value of $\ell$ such that this rate is approximatively $0.234$. Note that this rate must be computed by considering only iterations in which it is proposed to update the parameters.

 %Note that the speed measures of $\{Z_1(t),t\geq 0\}$ and $\{Z_2(t),t\geq 0\}$, given respectively by $ 2(1-\tau)/(A+1)$ and $ 2\tau \ell^2 \Phi(-\ell\sqrt{\Upsilon}/2)$, theoretically confirm that users should select the smallest value for $A$ satisfying $f/q\leq A^*=A/2$. Indeed, the speed measure of $\{Z_1(t),t\geq 0\}$ is maximised when $A$ is small, and the one of $\{Z_2(t),t\geq 0\}$ does not depend on $A$.

 We now optimise the algorithm with respect to $\tau$. We need a measure that takes into account the fact that an increase in the value of $\tau$ results in an increase in the speed of $\{Z_2(t): t\geq 0\}$, but also in a decrease in that of $\{Z_1(t): t\geq 0\}$, and vice versa. Essentially, when the value of $\tau$ is increased, more updates of the parameters (and therefore less model switches) are proposed. It would seem natural to consider the total speed of these two processes to optimise the algorithm with respect to $\tau$. The total speed is given by (using the optimal value for $\ell$)
 $$
  2\left[\tau(2.38^2/\Upsilon)\Phi(-2.38/2)+(1-\tau)/(A+1) \right].
 $$
 However, it is not a suitable measure because if, for instance $(2.38^2/\Upsilon)\Phi(-2.38/2)=(5.66/\Upsilon)$ $\Phi(-1.19)>1/(A+1)$, it would be proposed to choose the value of $\tau$ as close as possible to 1. In such a situation, there would be very few model switches, which would result in a slow exploration of the entire state space. We need a measure that penalises such a behaviour. This is achieved using integrated linear combinations of the autocorrelation functions (ACFs) of $\{Z_1(t): t\geq 0\}$ and $\{Z_2(t): t\geq 0\}$. Indeed, if for instance we set $\tau$ close to 1, $\{Z_2(t): t\geq 0\}$ would be a ``fast'' process with an ACF that decreases rapidly towards 0, while $\{Z_1(t): t\geq 0\}$ would be a ``slow'' process with an almost constant ACF around the value 1, which is undesirable. The sum of these two ACFs would decrease rapidly towards 1, thereafter remaining almost constant around this value. There should therefore exist a value of $\tau$ between 0 and 1 that induces two relatively ``fast'' processes, with a sum of ACFs that decreases relatively rapidly towards 0. We thus consider the integral of the sum of the ACFs of $\{Z_1(t): t\geq 0\}$ and $\{Z_2(t): t\geq 0\}$ to optimise the algorithm with respect to $\tau$:
\begin{align}\label{eqn_opt_mea}
  \int_0^\infty \left\{ \text{corr}[Z_1(t),Z_1(t+s)]+\text{corr}[Z_2(t),Z_2(t+s)]\right\}ds,\quad t\geq 0.
 \end{align}
 We drew inspiration from the effective sample size (see, e.g., Section 12.3.5 of \cite{robert2004monte}). The measure can be viewed as the sum of the (infinitesimally) integrated autocorrelation times of $\{Z_1(t): t\geq 0\}$ and $\{Z_2(t): t\geq 0\}$. It therefore represents a measure of the total ``inefficiency'' of these processes and the optimal value of $\tau$ is the one that minimises it.

We need to compute the ACFs of $\{Z_1(t): t\geq 0\}$ and $\{Z_2(t): t\geq 0\}$ in order to find the optimal value for $\tau$. The process $\{Z_1(t): t\geq 0\}$ satisfies the conditions of Theorem 2.1 in \cite{bibby2005diffusion}, implying that
 $$
  \text{corr}[Z_1(t),Z_1(t+s)]=\exp\left\{-(1-\tau)s/(A+1)\right\},\quad s,t\geq 0.
 $$
 The behaviour of $\{Z_2(t): t\geq 0\}$ depends on $f$ and consequently its ACF cannot be computed in all generality. A particular situation is now studied in order to obtain general information about the optimal values of $\tau$. We consider the case $f=\mathcal{N}(\mu,\sigma^2)$, $\mu\in\re,\sigma>0$, which allows to obtain an explicit solution to our optimisation problem. Using the optimal value for $\ell$, we have that
 \begin{align*}
  dZ_2(t)&=-\tau 5.66 \sigma^2 \Phi\left(-1.19\right)\times (Z_2(t)-\mu)/\sigma^2 dt+\sqrt{2\tau 5.66 \sigma^2\Phi\left(-1.19\right)}dB_2(t).
 \end{align*}
 This process also satisfies the conditions of Theorem 2.1 in \cite{bibby2005diffusion}, implying that
 $$
  \text{corr}[Z_2(t),Z_2(t+s)]=\exp\left\{-\tau 5.66 \Phi\left(-1.19\right) s\right\},\quad s,t\geq 0.
 $$
 Thus, when $f=\mathcal{N}(\mu,\sigma^2)$ and $\ell$ is set to its optimal value, the optimal value for $\tau$ is
  \begin{align}\label{eqn_tau_opt}
  & \frac{\sqrt{5.66 \Phi\left(-1.19\right)(A+1)}-1}{5.66 \Phi\left(-1.19\right)(A+1)-1}.
 \end{align}
 The constant $A$ clearly has an impact on the optimal value for $\tau$, as intuitively explained in the introduction. The optimal value essentially decreases as $1/\sqrt{A}$ (see Figure~\ref{fig_int_tauopt_A}). %Note that the sum of the ACFs arising from the optimal value for $\tau$ represents the curve which decreases most rapidly towards 0, in the sense that the area under the curve is minimised. It indicates that the underlying process $\{\mathbf{Z}(t),t\geq 0\}$ optimally explores its state space.
 When $A=2$, the optimal value for $\tau$ is $0.415$. This situation corresponds to $f=q$, and therefore to the best choice of distribution $q$. When $A=5$, the optimal value for $\tau$ is $0.334$, and when $A=25$, it is $0.194$. The constant $A$ therefore has an indirect impact on the sample paths of $\{Z_2(t): t\geq 0\}$ through the optimal choice for $\tau$. In other words, the larger is $A$, the higher should be the probability to propose model switches, which implies less parameter updates.
 %Indeed, the speed measure of this process, which is given by $2\tau \ell^2\Phi(-\ell\sqrt{\Upsilon}/2)$, decreases as $A$ increases if $\tau$ is set to its optimal value. Again, selecting the smallest admissible value for $A$ is desirable.
 Notice from Figure~\ref{fig_int_tauopt_A} that the integrals of the sum of the ACFs do not vary much when $\tau$ is between $0.2$ and $0.6$ for the cases $A=2$ and $A=5$. As will be observed in the numerical experiment in Section~\ref{sec_simulation_weak}, this is also the case when $A=25$, but $n$ is not too large. This suggests that, when either $A$ or $n$ is not too large, setting $\tau$ to any value in this interval leads to algorithms having similar efficiencies. As $n$ increases, users should however narrow down to the optimal values for $\tau$ in the case where $A$ is moderate or large.

%  \begin{figure}[h]
%  \centering
% \includegraphics[width=15cm]{graph_auto.pdf}
%   \caption{\small Sum of the ACFs as a function of $s$ for different values of $\tau$ and $A=2,5,25$, when $f=\mathcal{N}(\mu,\sigma^2)$ and $\ell$ is set to its optimal value (for each graph, the solid line represents the function arising from the optimal value of the constant $\tau$)}\label{fig_auto_s}
%  \end{figure}
 \begin{figure}[h]
 \centering
    \includegraphics[width=15cm]{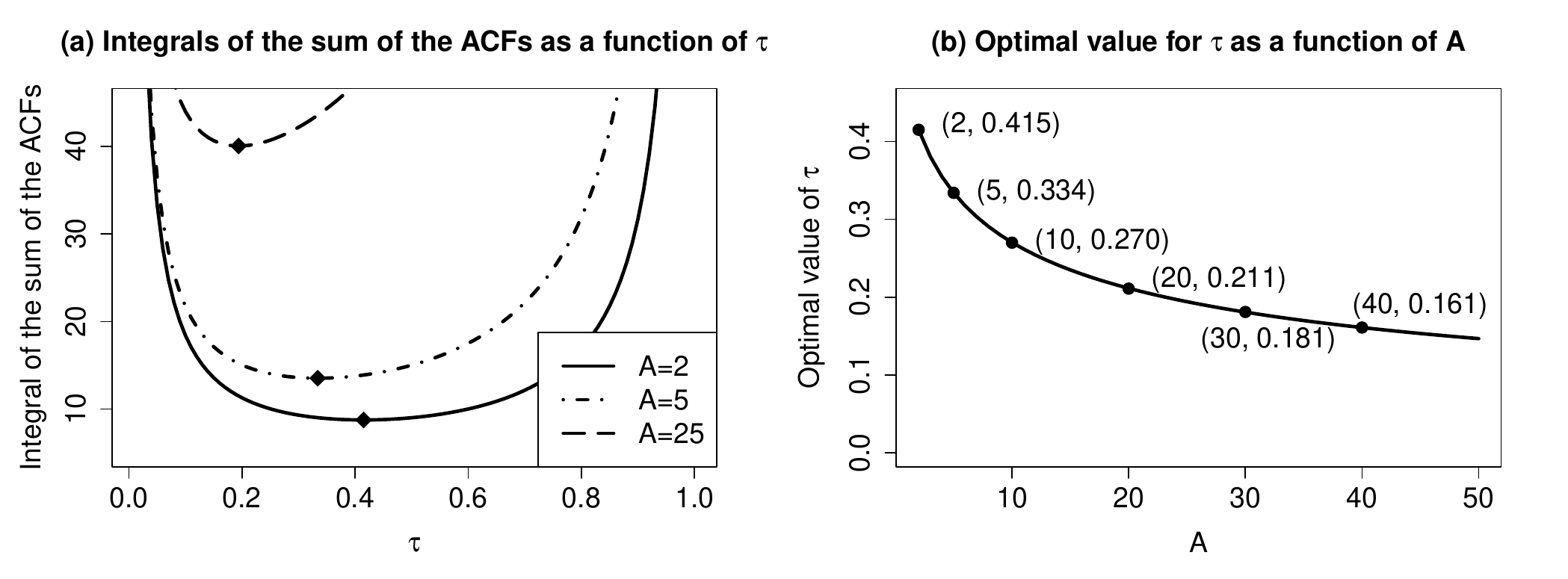}
    \vspace{-5mm}
  \caption{\small (a) Integrals of the sum of the ACFs as a function of $\tau$ for $A=2,5,25$ (the diamonds represent the optimal value for $\tau$), (b) Optimal value for $\tau$ as a function of $A$; for both graphs, $f=\mathcal{N}(\mu,\sigma^2)$ and $\ell$ is set to its optimal value}\label{fig_int_tauopt_A}
 \end{figure}

 Different distances between $f$ and $q$ lead to different upper bounds on $f/q$ (which are given by $A/2$), and also different acceptance probabilities for switching models. In practice, the former are much more difficult to evaluate than the latter. In addition, we believe that in general the behaviours of the acceptance probabilities are not summarised by the upper bound on $f/q$. This is why we propose to tune the distribution $g$ according to the acceptance rates.

\section{Practical Considerations}

 \subsection{Optimal Implementation and Generalisation}\label{sec_optimal_impl}

  Now that the framework is properly defined and the results presented, we are able to state the rules for tuning the RJ algorithm presented in the introduction more precisely. We first recommend to perform some trial runs. They shall be used to identify the value of $\ell$ that is such that the proposed parameter updates are accepted with a rate of $0.234$. At the same time, record the number of times that model switches are proposed and accepted, i.e.\ the number of times that $K$ moves, and compute the rate of these events. If this rate is not too low or the dimensions of the models are not too large, setting $\tau$ to any value in the interval $(0.2, 0.6)$ should be suitable. We recommend to compute its optimal value as a reference. As $n$ increases, users should increasingly consider the optimal value. In contexts as ours, considering that the rate is, theoretically (and asymptotically), $2(1-\tau)/(A+1)$ (see Section~\ref{sec_context_weak}), the optimal value can be determined through (\ref{eqn_tau_opt}). We believe this result leads to appropriate values in more general settings. We therefore propose to consider the following rule in these: divide the rate by $1-\tau$ (denote the result $r$), and compute the ``optimal'' value for $\tau$ using $(\sqrt{1/r}-1)/(1/r-1)$. This rule has been derived from (\ref{eqn_tau_opt}), in which $5.66\Phi(-1.19)=0.66$ has been replaced by $1/2$. The acceptance rate of each different type of model switches should be recorded as well. As mentioned in Section~\ref{sec_context_weak}, it is preferable to avoid very small rates. The information gathered about $f$ can be used to improve the design of the proposal distribution $q$, which will lead to higher rates. The process is in this sense iterative.

%  When users are able to state that $\pi_n(k,$ $\mathbf{x}^k)=p(k)\prod_{i=1}^{n+k} f(x_i^k)$, with $f=\mathcal{N}(\mu,\sigma^2)$ and $p$ defined as in Section~\ref{sec_context_weak}, they can directly construct the optimal algorithm setting $q=f$ (therefore $A=2$), $\tau=0.415$ (the optimal value for $\tau$ given by (\ref{eqn_tau_opt})), and $\ell=2.38\sigma$ (the optimal value for $\ell$). This situation is however unlikely to occur. Our recommendation for optimising the algorithm is the following: perform some trial runs to tune the value of $\ell$ (according to Corollary~\ref{prop_conv_prob_acc}), and to obtain general information about $f$ such as the location and scale (which is useful when we cannot directly sample from $f$). The information gathered about $f$ enables to improve the design of the proposal distribution $q$, and thus to reduce the value of the upper bound of $f/q$, given by $A^*=:A/2$, which in turn leads to a new optimal value for $\tau$. The optimal value for $\tau$ given by (\ref{eqn_tau_opt}) is theoretically valid when $f=\mathcal{N}(\mu,\sigma^2)$, but it should be suitable if $f$ has a similar shape. If users lack information about $f$ at the beginning of the process, they should start with conservative values for $A$ and $\ell$.

As mentioned in the introduction, the optimal scaling result of \cite{roberts1997weak} is known to be relatively robust, in the sense that it holds under weaker assumptions (see, e.g., \cite{roberts2001optimal} and \cite{bedard2007weak}). Believing that our theoretical results are also robust, we conjecture that they are valid when
\begin{align}\label{eqn_post_gen}
 \pi_n(k,\mathbf{x}^k)=p_n(k)\prod_{i=1}^{n+k} f_i(x_i^k),
\end{align}
where $f_i(x_i^k):=(1/\sigma_i) f((x_i^k-\mu_i)/\sigma_i)$, $\mu_i\in\re$ and $\sigma_i>0$ are constants, and $f$ and $p_n$ satisfy the assumptions described in Section~\ref{sec_context_weak}. In this case, the sampler described in Section~\ref{sec_context_weak} is applied, the only difference being that a proposal distribution $q_{K+1}$ is used to add a parameter to switch from Model $K$ to Model $K+1$, in order to accommodate for the different functions $f_i$. The corresponding assumption on $f/q$ is $f_i/q_i\leq A_i^*\leq A_n^*$ for all $i\in\{n+1,\ldots,n+\lfloor \sqrt{n}\log n\rfloor\}$ with $A_n^*:=\max_{i} A_i^*$, and $A_n^*\leq A^*$ for all $n$. A similar procedure as that described previously for optimising the algorithm would be applied. We thus expect the results in this paper to be applicable to models in which the parameters are independent but not identically distributed. In particular, they turn out to be useful in designing the RJ algorithm when the posterior distribution arises from a robust principal component regression, as shown in \cite{gagnon2017robustPCR}. This is explained by the fact that the posterior has a structure similar to that in  (\ref{eqn_post_gen}). The authors are consequently able to design an efficient sampler that allows identification of the relevant components and estimation of the parameters for prediction purpose. The generalised version of the algorithm (described above) is applied. It is observed that the optimal value for $\ell$ corresponds to an acceptance rate close to $0.234$ and the optimal value for $\tau$ is close to $0.6$, which makes sense considering that the dimensions are small in their situation. Note that the generalisation of our results to contexts like this one is not trivial.

\subsection{A Numerical Experiment}\label{sec_simulation_weak}

 Samples produced by the RJ algorithm provide information about the joint posterior distribution of the models and their parameters. As a result, users can select models (usually those with the highest frequencies in the sample) and estimate their parameters (using sample means and intervals for instance). Ideally, the selected models, along with the parameter estimates, would be the same as if the ``true'' posterior distribution had been used. In this section, we show what is the impact of the design of the sampler on the quality of the approximations of the posterior model probabilities and parameter estimates. We consider the framework defined in this paper, under which the optimal design is known. This will also allow to understand how the theoretical results presented in Section~\ref{sec_result_weak} translate in practice.

Implementing the RJ algorithm described in Section~\ref{sec_context_weak} comes down to specifying the PDF $q$ and the values of the constants $A,\tau$ and $\ell$. In Section~\ref{sec_result_weak}, it has been shown that the constants $\tau$ and $A$ have an impact on the estimation of the whole joint posterior distribution of $(K, \mathbf{X}^K)$. The constant $\ell$ has an impact on the $\mathbf{X}^K$ part only and this has been thoroughly studied by \cite{roberts2001optimal}. In Section~\ref{sec_result_weak}, it has also been explained that, asymptotically, the PDF $q$ only has an impact through the constant $A$. We can therefore reproduce any distance between $f$ and $q$ by only changing $A$. In this section, we therefore focus on showing the impact of the constants $\tau$ and $A$ on the approximations. In other words, we focus on showing how different probabilities of proposing parameter updates (and therefore model switches) and different distances between $f$ and $q$ have an impact on the approximations. We consider $f=\mathcal{N}(\mu,\sigma^2)$, $\mu\in\re,\sigma>0$, and we set $\ell=2.38\sigma$ (the optimal value) and $q=f$. We evaluate the performance of the algorithm for every $\tau\in( 0, 1)$, in the cases where $A=2,5,25$. For a given $A$, the optimal value for $\tau$ can thus be determined using (\ref{eqn_tau_opt}).

For fixed $\tau$ and $A$, we evaluate the performance of the RJ algorithm using mean absolute deviations (MADs) around quantities that are usually of interest for users: the posterior mode of $K$ (denoted by $k^*$), and the posterior mean and standard deviation of $X_i^K$, $i\in\{1,\ldots,n+K\}$ (we consider the first parameter $X_1^{K}$ for the experiment). For a given sample produced by the RJ algorithm, $k^*$ is estimated by $\widehat{k^*}$, the mode of the sample associated with the random variable $K$, and $\mu$ and $\sigma$ are estimated by $\hat{\mu}$ and $\hat{\sigma}$, which are respectively the mean and standard deviation of the sample associated with the random variable $X_i^K$. Adequate samples lead to accurate estimates, thus resulting in small absolute deviations. For fixed $\tau$ and $A$, we approximate the MADs by independently running $N$ times the RJ algorithm and by computing $\sum_{i=1}^N |\widehat{k^*_{i}}-k^*|/N$, $\sum_{i=1}^N\abs{\hat{\mu}_{i}-\mu}/N$ and $\sum_{i=1}^N\abs{\hat{\sigma}_{i}-\sigma}/N$, where $\widehat{k^*_{i}}$, $ \hat{\mu}_{i}$ and $\hat{\sigma}_{i}$ are respectively the mode, mean and standard deviation based on the sample produced by the $i$th run.  We also compute a global measure that we expect to have the same behaviour as that used in Section~\ref{sec_result_weak} (and given in~(\ref{eqn_opt_mea})) to optimise the algorithm with respect to $\tau$. This global measure is a linear combination of a standardised version of the MADs:
\small$$
 \frac{\frac{1}{N}\sum_{i=1}^N |\widehat{k^*_{i}}-k^*|}{\sqrt{\frac{1}{N-1}\sum_{i=1}^N (\widehat{k^*_{i}}-k^*)^2}}+\frac{1}{2}\left(\frac{\frac{1}{N}\sum_{i=1}^N\abs{\hat{\mu}_{i}-\mu}}{\sqrt{\frac{1}{N-1}\sum_{i=1}^N(\hat{\mu}_{i}-\mu)^2}}+\frac{\frac{1}{N}\sum_{i=1}^N\abs{\hat{\sigma}_{i}-\sigma}}{\sqrt{\frac{1}{N-1}\sum_{i=1}^N(\hat{\sigma}_{i}-\sigma)^2}}\right).
$$\normalsize
In this experiment, $N=\,$1,000, $\mu=0$, $\sigma=1$, and each sample is of size 100,000. The results are presented in Figure~\ref{fig_performance}.

As expected, the quality of the sample associated with $K$ decreases as the value of $\tau$ increases due to fewer model switches. An increase in $\tau$ has the opposite effect regarding the quality of the sample associated with $\mathbf{X}^K$. The vertical lines represent the optimal values for $\tau$, which are $0.415$, $0.334$ and $0.194$, when $A=2, 5, 25$, respectively. These values are optimal in the sense that they allow to reach the appropriate balance between adequate samples for $K$ (but inadequate ones for $\mathbf{X}^K$) and adequate samples for $\mathbf{X}^K$ (but inadequate ones for $K$).

Figure~\ref{fig_performance} also helps illustrate that the smaller is $A$, the better it is,
an aspect that has been theoretically justified in Section~\ref{sec_result_weak}. Indeed, the value of $A$ has a direct impact on the quality of the samples for $K$ (it decreases as the value of $A$ increases), and it has an indirect impact on the quality of those for $\mathbf{X}^K$ through the optimal value for $\tau$.

We finally note that, as $n\rightarrow\infty$, the curves defined by the global measure as a function of $\tau$ look more and more like those in Figure~\ref{fig_int_tauopt_A} (a). For moderate values of $n$, the curves defined by the global measure are however almost flat between 0.2 and 0.6.

\section{Conclusion}\label{sec_conclusion_weak}

In this paper, we have provided guidelines for implementing an efficient RJ algorithm. They rely on theoretical results that hold when the algorithm is applied to sample from a target distribution $\pi_n$ that satisfies the assumptions provided in Section~\ref{sec_context_weak}. Essentially, the target dis\-tribution $\pi_n$ must be a product of the PMF $p_n$ (the distribution of the model indicator $K$) and a $(n+K)$-product of PDFs $f$. Being aware that this sampling context is simple, our goal was to open new research directions in optimal tuning of RJ algorithms and provide rules that we believe are robust. They seem valid for tuning the sampler when the posterior arise from robust principal component regression, as explained in Section~\ref{sec_optimal_impl}. We conjectured that they are and that, more generally, our results hold when $\pi_n$ is  comprised of a product of different functions $f_i$.

\section{Proof of Theorem \ref{main_result}}\label{sec_proof_thm_1}

 This section is dedicated to the demonstration of the main result of this paper, the weak convergence $\{\mathbf{Z}_{1,2}^n(t):t\geq 0\}\Rightarrow \{\mathbf{Z}(t):t\geq 0\}$ in the Skorokhod topology as $n\rightarrow\infty$ (the stochastic processes $\{\mathbf{Z}_{1,2}^n(t):t\geq0\}$ and $\{\mathbf{Z}(t):t\geq0\}$ have been defined in Theorem~\ref{main_result}). Thus consider the sampling context described in Section~\ref{sec_context_weak}.

 In order to prove the result, we demonstrate the convergence of the finite-dimensional distributions of $\{\mathbf{Z}_{1,2}^n(t): t\geq0\}$ to those of $\{\mathbf{Z}(t): t\geq0\}$. To achieve this, we verify condition (c) of \makebox[\textwidth][s]{Theorem 8.2 from Chapter 4 of \cite{ethier1986markov}. The weak convergence then follows from}

   \begin{figure}[H]
  \centering
  $\begin{array}{ccc}
    \multicolumn{3}{c}{\hskip-0.5cm \includegraphics[width=14cm]{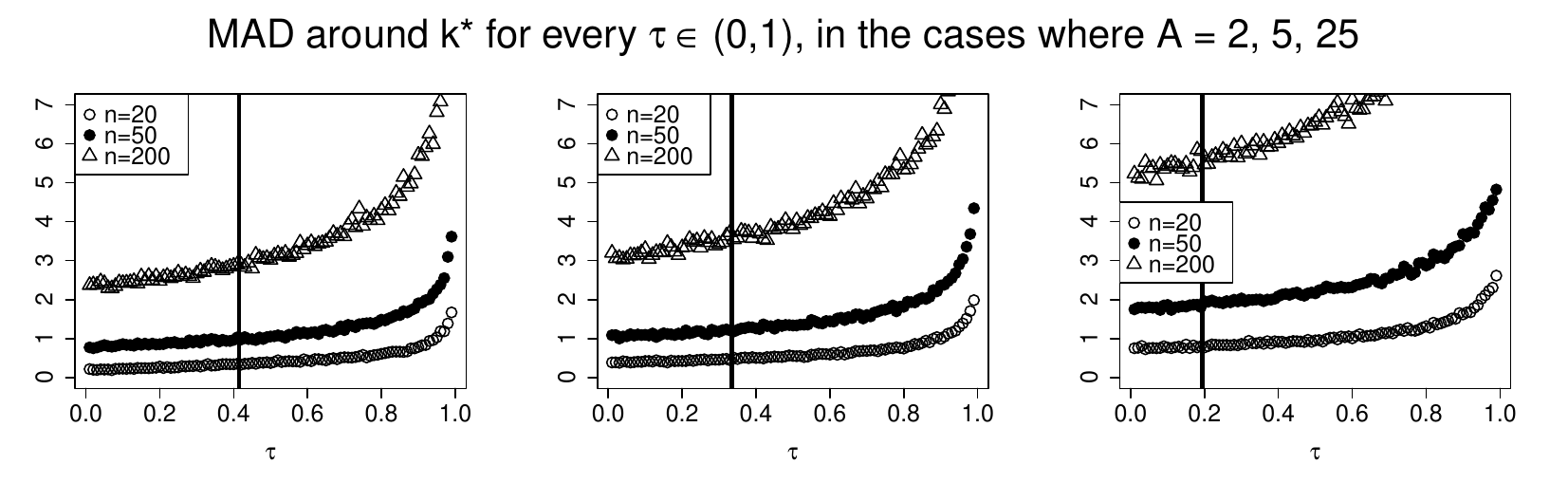}} \cr
    \multicolumn{3}{c}{\hskip-0.5cm \includegraphics[width=14cm]{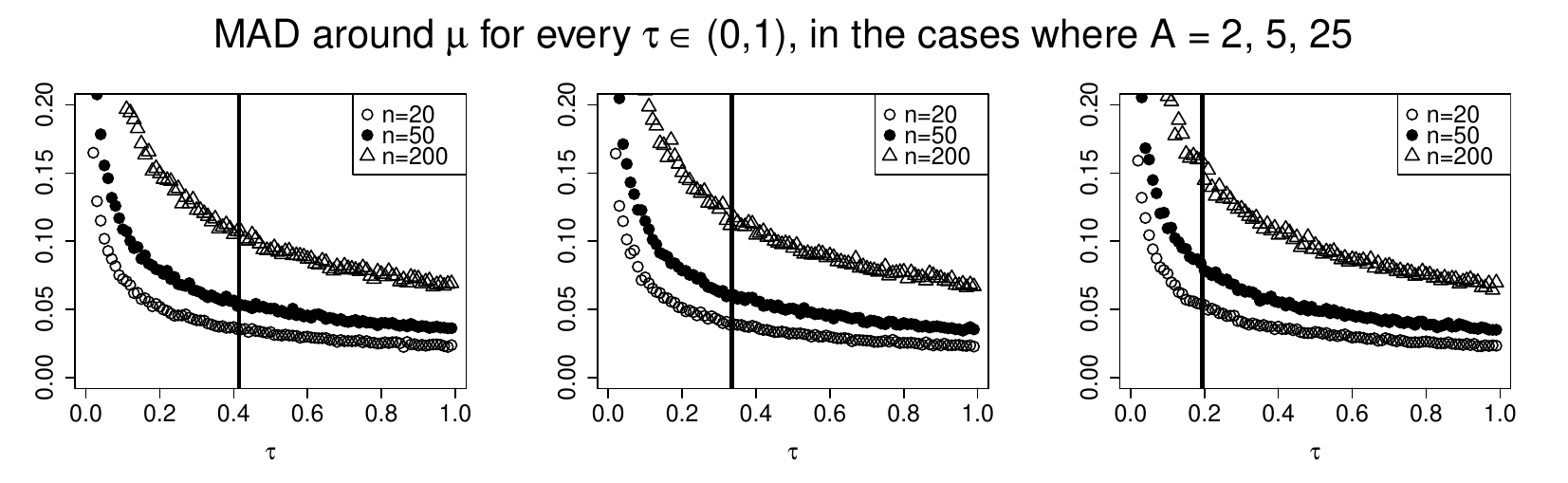}} \cr
   \multicolumn{3}{c}{\hskip-0.5cm \includegraphics[width=14cm]{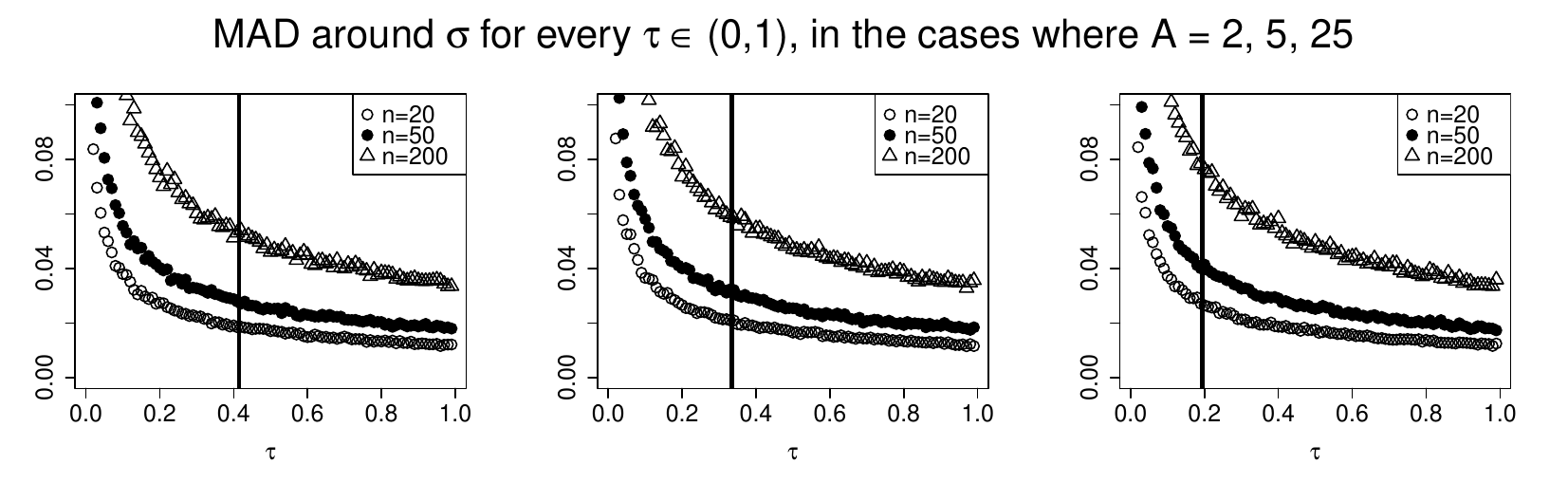}} \cr
  \multicolumn{3}{c}{\hskip-0.5cm \includegraphics[width=14cm]{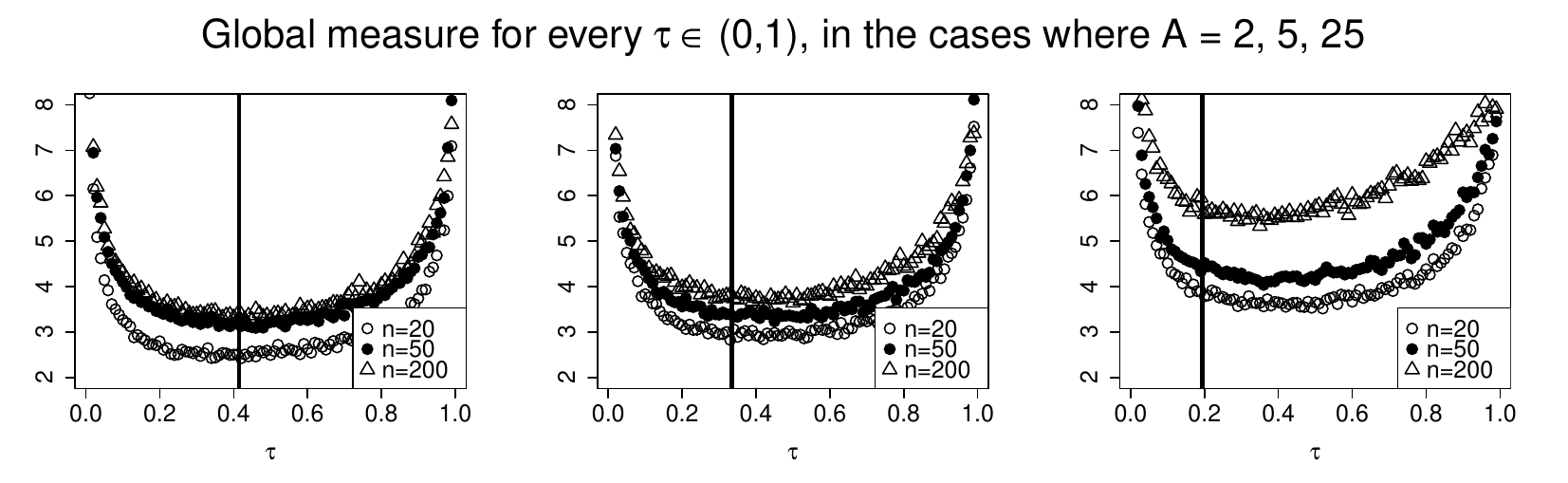}} \cr
    \hspace{1cm} \textbf{(a) \textit{A = 2}} & \hspace{2.5cm} \textbf{(b) \textit{A = 5}} & \hspace{1cm}\textbf{(c) \textit{A = 25}} \cr
  \end{array}$
  \caption{\small MADs around $k^*$, $\mu$ and $\sigma$, and the global measure, for every $\tau\in( 0, 1)$, in the cases where $A=2,5,25$ (the vertical lines represent the optimal values for $\tau$, which are $0.415$, $0.334$ and $0.194$, when $A=2, 5, 25$, respectively)}\label{fig_performance}
 \end{figure}
 %\makebox[\textwidth][s]{}
% \noindent

 \noindent Corollary 8.6 of Chapter 4 of \cite{ethier1986markov}. The remaining conditions of Theorem 8.2 and the conditions specified in Corollary 8.6 are either straightforward or easily derived from the proof given in this section. They are nevertheless explicitly verified in \cite{gagnon2016thesis}.

 The proof of the convergence of the finite-dimensional distributions relies on the convergence of (what we call) the ``pseudo-generator'', a quantity that we now introduce. The proof follows in Section~\ref{sec_proof_finite_dim}.

 \subsection{Pseudo-Generator}\label{sec_generator}

In this section, we introduce a quantity that we call the ``pseudo-generator" of $\{\mathbf{Z}_{1,2}^n(t): t\geq 0\}$ due to its similarity with the infinitesimal generator of stochastic processes. It is defined as follows:
$$
 \varphi_{n}(t):=n\E[h(\mathbf{Z}_{1,2}^n(t+1/n))-h(\mathbf{Z}_{1,2}^n(t))\mid \F^{\mathbf{Z}^n}(t)],
$$
where  $h\in\mathcal{C}_c^\infty(\re^2)$, the space of infinitely differentiable functions on $\re^2$ with compact support. Theorem 2.5 from Chapter 8 of \cite{ethier1986markov} allows us to restrict our attention to this set of functions when studying the limiting behaviour of the pseudo-generator (see \cite{gagnon2016thesis} for more details).

Let
 $$
  R^K(\lfloor nt \rfloor):=\frac{K(\lfloor nt \rfloor )-\lfloor \sqrt{n}\log n \rfloor/2}{\sqrt{n}}=Z_1^n(t).
 $$
The pseudo-generator $\varphi_{n}(t)$ can be decomposed into three parts, each associated with a specific type of movement, as follows:
 $$
  \varphi_{n}(t)=\varphi_{1,n}(t)+\varphi_{2,n}(t)+\varphi_{3,n}(t),
 $$
 where $\varphi_{1,n}(t)$ is associated with the update of the parameters, i.e.\
 \begin{align*}
  \varphi_{1,n}(t)&:=n\tau\E\left[\left(h(R^K,Y_1^K)-h(R^K,X_1^K)\right)\left(1 \wedge \frac{\prod_{i=1}^{n+K} f(Y_i^{K})}{\prod_{i=1}^{n+K} f(X_i^{K})}\right)\mid R^K,\mathbf{X}^K\right],
 \end{align*}
  $\varphi_{2,n}(t)$ is associated with the inclusion of an extra parameter, i.e.\
 \begin{align}\label{def_phi2}
  \varphi_{2,n}(t)&:=\frac{n(1-\tau)A}{A+1}\E\left[\left(h(R^{K+1},X_1^K)-h(R^K,X_1^K)\right)\left(1 \wedge \frac{f(U)p_n(K+1) }{q(U)p_n(K)A}\right)\mid R^K,\mathbf{X}^K\right],
 \end{align}
 and $\varphi_{3,n}(t)$ is associated with the withdrawal of the last parameter, i.e.\
 \begin{align}\label{def_phi3}
  \varphi_{3,n}(t)&:=\frac{n(1-\tau)}{A+1}\E\left[\left(h(R^{K-1},X_1^K)-h(R^K,X_1^K)\right)\left(1 \wedge \frac{q(X_{n+K}^{K})p_n(K-1)A }{f(X_{n+K}^{K})p_n(K)}\right)\mid R^K,\mathbf{X}^K\right].
 \end{align}
 Note that the Markov process $\{(R^K(m),\mathbf{X}^{K(m)}(m)): m\in\na\}$ is time-homo\-ge\-neous, and consequently, the time index has been omitted to simplify the notation. Also note that, when there is an update of the parameters, only the parameters $\mathbf{X}^K$ move (the model indicator remains the same). When an extra parameter is included or the last parameter withdrawn, only the model indicator changes, as a switch from Model $K$ to Model $K+1$ or $K-1$ is made.

 \subsection{Proof of the Convergence of the Finite-Dimensional Distributions}\label{sec_proof_finite_dim}

 Condition (c) of Theorem 8.2 from Chapter 4 of \cite{ethier1986markov} essentially reduces to the following convergence:
 \begin{align*}
  \E\left[\abs{\varphi_n(t)-Gh(\mathbf{Z}_{1,2}^n(t))}\right]\rightarrow 0 \text{ as } n\rightarrow \infty,
 \end{align*}
 where $G$ is the generator of a diffusion with $G=G_1+G_2$ and
 \begin{align*}
  G_1h(\mathbf{Z}_{1,2}^n(t))&=\tau \ell^2 \Phi\left(-\frac{\ell\sqrt{\Upsilon}}{2}\right)(\log f(Z_2^n(t)))' h_y(\mathbf{Z}_{1,2}^n(t))+\tau \ell^2\Phi\left(-\frac{\ell\sqrt{\Upsilon}}{2}\right)h_{yy}(\mathbf{Z}_{1,2}^n(t)), \cr
  G_2h(\mathbf{Z}_{1,2}^n(t))&=\frac{1-\tau}{A+1}\times -Z_1^n(t)h_{x}(\mathbf{Z}_{1,2}^n(t))+\frac{1-\tau}{A+1}h_{xx}(\mathbf{Z}_{1,2}^n(t)).
 \end{align*}
 The function $h$ above is the same function $h$ involved in the definition of the random variable $\varphi_{n}(t)$ (given in Section~\ref{sec_generator}). In other words, the convergence has to be proved for an arbitrary function $h\in\mathcal{C}_c^\infty(\re^2)$. The functions $h_x$ and $h_{xx}$ respectively represent the first and second derivatives of $h$ with respect to its first argument. Analogously, the functions $h_y$ and $h_{yy}$ respectively represent the first and second derivatives of $h$ with respect to its second argument. Note that it exists a positive constant $M$ such that $h$ and all its derivatives are bounded in absolute value by this constant.

 Using the triangle inequality, we have
 \begin{align*}
  \E\left[\abs{\varphi_n(t)-Gh(\mathbf{Z}_{1,2}^n(t))}\right]&\leq \E\left[\abs{\varphi_{1,n}(t)-G_1h(\mathbf{Z}_{1,2}^n(t))}\right]+\E\left[\abs{\varphi_{2,n}(t)+\varphi_{3,n}(t)-G_2h(\mathbf{Z}_{1,2}^n(t))}\right].
 \end{align*}
 In this paper, we show that the second term on the right-hand side (RHS) converges towards 0 as $n\rightarrow\infty$. The proof that the first term converges towards 0 is similar to that of Theorem~1.1 of \cite{roberts1997weak}, and is thus omitted (it can nevertheless be found in \cite{gagnon2016thesis}).

 The key here is the use of Taylor expansions in order to obtain derivatives of $h$ as in generators of diffusions.

 We first analyse $\varphi_{2,n}(t)$ as defined in (\ref{def_phi2}). As explained in Section~\ref{sec_context_weak}, $0\leq p_n(K+1)/p_n(K)\leq 2$, and therefore,
 $$
  \frac{f(U)p_n(K+1) }{q(U)p_n(K)A}\leq \frac{2f(U)}{q(U)A}\leq 1.
 $$
 Consequently, since $h(R^{K+1},X_1^K)=h(R^{K}+1/\sqrt{n},X_1^K)$,
 \begin{align*}
  \varphi_{2,n}(t)&
  =\frac{n(1-\tau)}{A+1}\left(h(R^K+1/\sqrt{n},X_1^K)-h(R^K,X_1^K)\right)\frac{p_n(K+1)}{p_n(K)}\E\left[\frac{f(U) }{q(U)}\mid  R^K,\mathbf{X}^K\right] \cr
  &=\frac{n(1-\tau)}{A+1}\left(h(R^K+1/\sqrt{n},X_1^K)-h(R^K,X_1^K)\right)\frac{p_n(K+1)}{p_n(K)}.
 \end{align*}
 In the last equality, we use the fact that $U$ is independent of $(K,\mathbf{X}^K)$, and therefore,
 \begin{align*}
   \E\left[ \frac{f(U) }{q(U)}\mid  R^K,\mathbf{X}^K\right]=\E\left[ \frac{f(U) }{q(U)} \right]=\int_{-\infty}^\infty \frac{f(u)}{q(u)}q(u)du=1.
 \end{align*}
 Note that $\varphi_{2,n}(t)=0$ when $K=\lfloor \sqrt{n}\log n\rfloor$ since $p_n( \lfloor \sqrt{n}\log n\rfloor+1)=0$.

  We now study $\varphi_{3,n}(t)$ as defined in (\ref{def_phi3}). As explained in Section~\ref{sec_context_weak}, when $2\leq K\leq \lfloor\sqrt{n}\log n\rfloor$ we have that $p_n(K-1)/p_n(K)\geq 1/2$. Therefore, when $2\leq K\leq \lfloor\sqrt{n}\log n\rfloor$,
 $$
  \frac{q(X_{n+K}^{K})p_n(K-1)A }{f(X_{n+K}^{K})p_n(K)}\geq \frac{q(X_{n+K}^{K})A }{2f(X_{n+K}^{K})}\geq 1.
 $$
 This means that the acceptance probability of withdrawing the last parameter is 1, when it is possible to withdraw a parameter. Consequently, since $h(R^{K-1},X_1^K)=h(R^{K}-1/\sqrt{n},X_1^K)$,
 \begin{align*}
 &\varphi_{3,n}(t)=\frac{n(1-\tau)}{A+1}\left(h(R^{K}-1/\sqrt{n},X_1^K)-h(R^{K},X_1^K)\right)\ind(2\leq K\leq \lfloor\sqrt{n}\log n\rfloor).
 \end{align*}
 Note that $\varphi_{3,n}(t)=0$ when $K=1$ since $p_n(0)=0$.

 By using Taylor expansions of $h$ around $R^K$, we obtain that
 \begin{align*}
  h(R^K+1/\sqrt{n},X_1^K)-h(R^K,X_1^K)&=\frac{1}{\sqrt{n}}h_x(R^K,X_1^K)+\frac{1}{2n}h_{xx}(R^K,X_1^K)+\frac{1}{6n^{3/2}}h_{xxx}(W,X_1^K), \cr
  h(R^K-1/\sqrt{n},X_1^K)-h(R^K,X_1^K)&=-\frac{1}{\sqrt{n}}h_x(R^K,X_1^K)+\frac{1}{2n}h_{xx}(R^K,X_1^K)-\frac{1}{6n^{3/2}}h_{xxx}(T,X_1^K),
 \end{align*}
 where $W$ belongs to $(R^K,R^K+1/\sqrt{n})$, $T$ belongs to $(R^K-1/\sqrt{n},R^K)$, and $h_{xxx}$ represents the third derivative of $h$ with respect to its first argument.

 Therefore,
 \normalsize\begin{align}\label{phi2_phi_3_gen}
  &\varphi_{2,n}(t)+\varphi_{3,n}(t)-G_2h(\mathbf{Z}_{1,2}^n(t))=\ind(2\leq K\leq \lfloor \sqrt{n}\log n\rfloor-1) \cr
  &\qquad\times \frac{1-\tau}{A+1}h_x(R^K,X_1^K) \left[\sqrt{n}\left( \frac{p_n(K+1) }{p_n(K)}-1\right)+R^K\right] \cr
  &+\ind(K=1)\frac{1-\tau}{A+1}h_x(R^K,X_1^K)\left(\sqrt{n}\times\frac{p_n(K+1) }{p_n(K)}+R^K\right) \cr
  &-\ind(K=\lfloor\sqrt{n}\log n\rfloor)\frac{1-\tau}{A+1}h_x(R^K,X_1^K)\left(\sqrt{n}-R^K\right) \cr
  &+\ind(2\leq K\leq \lfloor \sqrt{n}\log n\rfloor-1)\frac{1-\tau}{2(A+1)}h_{xx}(R^K,X_1^K)\left( \frac{p_n(K+1) }{p_n(K)}-1\right) \cr
  &+\ind(K=1)\frac{1-\tau}{2(A+1)}h_{xx}(R^K,X_1^K)\left(\frac{p_n(K+1) }{p_n(K)}-2\right) \cr
  &-\ind(K=\lfloor\sqrt{n}\log n\rfloor)\frac{1-\tau}{2(A+1)}h_{xx}(R^K,X_1^K) \cr
  &+\frac{1-\tau}{6\sqrt{n}(A+1)}h_{xxx}(W,X_1^K) \frac{p_n(K+1) }{p_n(K)}\ind(1\leq K\leq \lfloor\sqrt{n}\log n\rfloor-1) \cr
  &-\frac{1-\tau}{6\sqrt{n}(A+1)}h_{xxx}(T,X_1^K)\ind(2\leq K\leq \lfloor\sqrt{n}\log n\rfloor).
 \end{align} \normalsize
 We now show that the expectation of the absolute value of each term on the RHS in (\ref{phi2_phi_3_gen}) converges towards 0 as $n\rightarrow\infty$. Consequently, using the triangle inequality we will obtain
 $$
  \E\left[\abs{\varphi_{2,n}(t)+\varphi_{3,n}(t)-G_2h(\mathbf{Z}_{1,2}^n(t))}\right]\rightarrow 0 \text{ as } n\rightarrow\infty.
 $$
   We start with the last terms in (\ref{phi2_phi_3_gen}) and make our way up. It is clear that the expectation of the absolute value of each of the last two terms converges towards 0 as $n\rightarrow\infty$ since $|h_{xxx}|\leq M$ and $0\leq p_n(K+1)/p_n(K)\leq 2$.

 We now analyse the fourth one (starting from the bottom). As $n\rightarrow\infty$,
\begin{align*}
   &\E\left[\left|\ind(K=1)\frac{1-\tau}{2(A+1)}h_{xx}(R^K,X_1^K)\left(\frac{p_n(K+1) }{p_n(K)}-2\right)\right|\right] \cr
   &\qquad\leq \frac{M(1-\tau)}{A+1} \times \E\left[\left|\ind(K=1)\right|\right] =\frac{M(1-\tau)}{A+1} \times \Prob(K=1)\rightarrow 0,
  \end{align*}
  using $\abs{h_{xx}}\leq M$ and $0\leq \abs{p_n(K+1)/p_n(K)-2}\leq 2$ in the first inequality. Proposition~\ref{prop_bound_dom} in Section~\ref{lemmas} is then used to conclude that $\Prob(K=1)\rightarrow 0$ as $n\rightarrow\infty$. The proof for the third term (starting from the bottom) is similar.

  Applying Lemmas~\ref{lemma_5} to \ref{lemma_8} from Section~\ref{lemmas}, each of the remaining terms is seen to converge towards 0 in $L^1$ as $n\rightarrow\infty$, and thus
  $$
   \E\left[\abs{\varphi_{2,n}(t)+\varphi_{3,n}(t)-G_2h(\mathbf{Z}_{1,2}^n(t))}\right]\rightarrow 0 \text{ as } n\rightarrow\infty.
   $$

\section{Results Used in the Proof of Theorem~\ref{main_result}}\label{lemmas}

 \begin{lemma}\label{lemma_5}
  As $n\rightarrow\infty$, we have %the expectation of the absolute value of the fifth term in (\ref{phi2_phi_3_gen}) (starting from the bottom) converges towards 0, i.e.
  \begin{align*}
   &\E\left[\left|\ind(2\leq K\leq \lfloor \sqrt{n}\log n\rfloor-1)\frac{1-\tau}{2(A+1)}h_{xx}(R^K,X_1^K)\left(\frac{p_n(K+1) }{p_n(K)}-1\right)\right|\right]\rightarrow 0.
  \end{align*}
 \end{lemma}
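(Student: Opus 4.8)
The plan is to use the uniform bound $\abs{h_{xx}}\le M$ to strip away the bounded factors and reduce the statement to a deterministic estimate on the PMF $p_n$. Since $\abs{h_{xx}}\le M$, the expectation in question is at most
$$
 \frac{M(1-\tau)}{2(A+1)}\,\E\left[\ind(2\le K\le \lfloor \sqrt{n}\log n\rfloor-1)\,\abs{\frac{p_n(K+1)}{p_n(K)}-1}\right]\le \frac{M(1-\tau)}{2(A+1)}\sup_{2\le k\le \lfloor \sqrt{n}\log n\rfloor-1}\abs{\frac{p_n(k+1)}{p_n(k)}-1},
$$
so it suffices to show the supremum on the right converges to $0$ as $n\to\infty$. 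The indicator restricts attention to indices $k$ for which both $p_n(k)$ and $p_n(k+1)$ are strictly positive — it excludes $k=1$ (where $p_n(0)=0$) and $k=\lfloor \sqrt{n}\log n\rfloor$ (where $p_n(\lfloor \sqrt{n}\log n\rfloor+1)=0$) — so the recursions defining $p_n$ in Section~\ref{sec_context_weak} apply throughout that range.

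Next I would read off the ratio $p_n(k+1)/p_n(k)$ directly from those recursions, which splits into a small number of regimes. Writing $m$ for the (lower) mode of $p_n$: on the decreasing side of the mode the recursion gives $p_n(k+1)=a_{k,n}p_n(k)$, so the ratio equals $a_{k,n}=1-b_{k,n}/\sqrt{n}\in(0,1)$; on the increasing side the recursion $p_n(j-1)=a_{j-1,n}p_n(j)$ with $j=k+1$ gives $p_n(k)=a_{k,n}p_n(k+1)$, so the ratio equals $a_{k,n}^{-1}>1$; and in the even (bimodal) case the single boundary index $k=\lfloor \sqrt{n}\log n\rfloor/2$ gives ratio exactly $1$ by (\ref{eqn_prob_k_even}). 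In all cases,
$$
 \abs{\frac{p_n(k+1)}{p_n(k)}-1}\le \frac{b_{k,n}/\sqrt{n}}{1-b_{k,n}/\sqrt{n}},
$$
since $1-a_{k,n}=b_{k,n}/\sqrt{n}\le \frac{b_{k,n}/\sqrt{n}}{1-b_{k,n}/\sqrt{n}}$ and $a_{k,n}^{-1}-1=\frac{b_{k,n}/\sqrt{n}}{1-b_{k,n}/\sqrt{n}}$.

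Finally, I would bound $b_{k,n}/\sqrt{n}$ uniformly. Since $k\in\{1,\dots,\lfloor \sqrt{n}\log n\rfloor\}$ we have $\abs{k-\lfloor \sqrt{n}\log n\rfloor/2}\le \lfloor \sqrt{n}\log n\rfloor/2$, hence $b_{k,n}/\sqrt{n}=\abs{k-\lfloor \sqrt{n}\log n\rfloor/2}/n\le \log n/(2\sqrt{n})<1/2$ for every $n\ge 7$. Therefore $1-b_{k,n}/\sqrt{n}>1/2$, so $\abs{p_n(k+1)/p_n(k)-1}\le 2b_{k,n}/\sqrt{n}\le \log n/\sqrt{n}$, uniformly over the admissible range of $k$. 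Substituting into the first display yields the bound $\frac{M(1-\tau)}{2(A+1)}\cdot\frac{\log n}{\sqrt{n}}$, which tends to $0$, proving the lemma. I do not expect a genuine obstacle here; the only point requiring care is the case bookkeeping — increasing side, decreasing side, and the odd versus even parity of $\lfloor \sqrt{n}\log n\rfloor$ — needed to put the ratio $p_n(k+1)/p_n(k)$ in closed form, after which the estimate is a one-line deterministic computation.
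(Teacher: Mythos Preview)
Your proof is correct and follows essentially the same approach as the paper: strip off the bounded factor $h_{xx}$, then use the recursive structure of $p_n$ to show that $\abs{p_n(k+1)/p_n(k)-1}$ is uniformly $O(\log n/\sqrt{n})$ over the admissible range of $k$. The only cosmetic difference is that you pass to a supremum and handle both sides of the mode with the single inequality $\abs{a_{k,n}^{\pm 1}-1}\le \dfrac{b_{k,n}/\sqrt{n}}{1-b_{k,n}/\sqrt{n}}$, whereas the paper keeps the expectation and treats the left-of-mode and right-of-mode cases separately; the resulting bounds are the same.
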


 \begin{proof}[Proof of Lemma \ref{lemma_5}]
  First,
  \begin{align*}
   &\E\left[\left|\ind(2\leq K\leq \lfloor \sqrt{n}\log n\rfloor-1)\frac{1-\tau}{2(A+1)}h_{xx}(R^K,X_1^K)\left(\frac{p_n(K+1) }{p_n(K)}-1\right)\right|\right] \cr
  &\qquad\leq \frac{M(1-\tau)}{2(A+1)}\E\left[\ind(2\leq K\leq \lfloor \sqrt{n}\log n\rfloor-1) \left|\frac{p_n(K+1) }{p_n(K)}-1\right|\right],
  \end{align*}
  because $\abs{h_{xx}}\leq M$.

 Considering the case where $\lfloor \sqrt{n}\log n\rfloor$ is odd, we have
  \begin{align*}
   &\E\left[\ind(2\leq K\leq \lfloor \sqrt{n}\log n\rfloor-1)\abs{\frac{p_n(K+1) }{p_n(K)}-1}\right] \cr
  &\quad=\E\left[\ind\left(\frac{\lfloor \sqrt{n}\log n\rfloor+1}{2}\leq K\leq \lfloor \sqrt{n}\log n\rfloor-1\right)\abs{\frac{p_n(K+1) }{p_n(K)}-1}\right] \cr
  &\qquad+\E\left[\ind\left(2\leq K\leq \frac{\lfloor \sqrt{n}\log n\rfloor-1}{2}\right)\abs{\frac{p_n(K+1) }{p_n(K)}-1}\right].
  \end{align*}
 We analyse each term separately. The first one corresponds to the case where $K$ is at the mode or at its right. Therefore, $p_n(K+1)/p_n(K)=a_{K,n}$ and

  \begin{align*}
   &\E\left[\ind\left(\frac{\lfloor \sqrt{n}\log n\rfloor+1}{2}\leq K\leq \lfloor \sqrt{n}\log n\rfloor-1\right)\abs{\frac{p_n(K+1) }{p_n(K)}-1}\right] \cr
   &\quad=\E\left[\ind\left(\frac{\lfloor \sqrt{n}\log n\rfloor+1}{2}\leq K\leq \lfloor \sqrt{n}\log n\rfloor-1\right)\abs{a_{K,n}-1}\right] \cr
   &\quad=\E\left[\ind\left(\frac{\lfloor \sqrt{n}\log n\rfloor+1}{2}\leq K\leq \lfloor \sqrt{n}\log n\rfloor-1\right)\frac{\abs{K-\lfloor \sqrt{n}\log n\rfloor/2}}{n}\right] \cr
    &\quad\leq \frac{\lfloor \sqrt{n}\log n\rfloor/2-1}{n} \leq \frac{\log n}{2\sqrt{n}} \rightarrow 0,
  \end{align*}
  because $a_{K,n}=1-b_{K,n}/\sqrt{n}$ and $b_{K,n}=\abs{K-\lfloor \sqrt{n}\log n\rfloor/2}/\sqrt{n}$. We now study the case where $K$ is at the left of the mode. Therefore, $p_n(K+1)/p_n(K)=a_{K,n}^{-1}$ and
  \begin{align*}
   &\E\left[\ind\left(2\leq K\leq \frac{\lfloor \sqrt{n}\log n\rfloor-1}{2}\right)\abs{\frac{p_n(K+1) }{p_n(K)}-1}\right] \cr
    &\quad=\E\left[\ind\left(2\leq K\leq \frac{\lfloor \sqrt{n}\log n\rfloor-1}{2}\right)\abs{a_{K,n}^{-1}-1}\right]\cr
    &\quad=\E\left[\ind\left(2\leq K\leq \frac{\lfloor \sqrt{n}\log n\rfloor-1}{2}\right)\abs{\frac{b_{K,n}}{\sqrt{n}-b_{K,n}}}\right]\leq \frac{(\log n)/2}{\sqrt{n}-(\log n)/2}\rightarrow0,
  \end{align*}
  using similar mathematical arguments as above and the fact that $1/(2\sqrt{n})\leq b_{K,n}\leq (\log n)/2$ when $2\leq K\leq (\lfloor \sqrt{n}\log n\rfloor-1)/2$. The proof for the case where $\lfloor \sqrt{n}\log n\rfloor$ is even is similar.
 \end{proof}

 \begin{lemma}\label{lemma_6}
  As $n\rightarrow\infty$, we have %the expectation of the absolute value of each of the sixth and seventh terms in (\ref{phi2_phi_3_gen}) (starting from the bottom) converges towards 0, i.e.
  $$
   \E\left[\left|\ind(K=1)\frac{1-\tau}{A+1}h_x(R^K,X_1^K)\left(\sqrt{n}\times \frac{p_n(K+1) }{p_n(K)}+R^K\right)\right|\right]\rightarrow 0,
  $$
  and
  $$
   \E\left[\left|\ind(K=\lfloor\sqrt{n}\log n\rfloor)\frac{1-\tau}{A+1}h_x(R^K,X_1^K)\left(\sqrt{n}-R^K\right)\right|\right]\rightarrow 0.
  $$
 \end{lemma}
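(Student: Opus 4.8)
I would first reduce both limits to a statement about how fast the boundary probabilities of $K$ vanish. Since $h\in\mathcal{C}_c^\infty(\re^2)$ we have $\abs{h_x}\le M$. On the event $\{K=1\}$ one computes $R^K=(1-\lfloor\sqrt{n}\log n\rfloor/2)/\sqrt{n}$, so $\abs{R^K}\le \lfloor\sqrt{n}\log n\rfloor/(2\sqrt{n})\le (\log n)/2$, and the same bound $\abs{R^K}\le(\log n)/2$ holds on $\{K=\lfloor\sqrt{n}\log n\rfloor\}$. Using in addition $0\le p_n(K+1)/p_n(K)\le 2$, the random variable inside the first expectation is bounded in absolute value by $\tfrac{(1-\tau)M}{A+1}\big(2\sqrt{n}+(\log n)/2\big)\ind(K=1)$ and the one inside the second by $\tfrac{(1-\tau)M}{A+1}\big(\sqrt{n}+(\log n)/2\big)\ind(K=\lfloor\sqrt{n}\log n\rfloor)$. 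Taking expectations, it suffices to prove that $\sqrt{n}\,\Prob(K=1)\to 0$ and $\sqrt{n}\,\Prob(K=\lfloor\sqrt{n}\log n\rfloor)\to 0$.

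Because the chain is $\pi_n$-invariant and $\mathbf{Z}^n(0)\sim\pi_n$, the marginal law of $K(\lfloor nt\rfloor)$ is $p_n$ for every $t$, so these two probabilities equal $p_n(1)$ and $p_n(\lfloor\sqrt{n}\log n\rfloor)$; by the assumed symmetry of $p_n$ about its mode, $p_n(\lfloor\sqrt{n}\log n\rfloor)=p_n(1)$, so it is enough to bound $p_n(1)$. Starting from the value of $p_n$ at its mode, which is at most $1$, and iterating the explicit ratios given in Section~\ref{sec_context_weak} (treating the even and odd cases for $\lfloor\sqrt{n}\log n\rfloor$ the same way), $p_n(1)$ is a product of $\Theta(\sqrt{n}\log n)$ factors of the form $1-c_i/n$ with $c_i$ of order $i$; by $1-x\le e^{-x}$ this product is at most $\exp(-\Theta((\log n)^2))$. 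Hence $\sqrt{n}\,p_n(1)\le \exp\big(\tfrac12\log n-\Theta((\log n)^2)\big)\to 0$.

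Combining the two steps, both expectations are bounded by a constant multiple of $\sqrt{n}\exp(-\Theta((\log n)^2))\to 0$, as required; one could equally well quote the quantitative bound on $p_n$ contained in Proposition~\ref{prop_bound_dom} in place of redoing the product estimate. The only real difficulty is precisely this estimate: the qualitative fact $\Prob(K=1)\to 0$ invoked earlier in the proof of Theorem~\ref{main_result} does not survive multiplication by $\sqrt{n}$, so one genuinely needs the super-polynomial (Gaussian-type) decay of $p_n$ at the endpoints of $\{1,\dots,\lfloor\sqrt{n}\log n\rfloor\}$; everything else is routine estimation using $\abs{h_x}\le M$, the bound $\abs{R^K}\le(\log n)/2$ on the relevant events, and $0\le p_n(K+1)/p_n(K)\le 2$.
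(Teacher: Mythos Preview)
Your proof is correct and follows the same route as the paper: bound $\abs{h_x}\le M$, use $0\le p_n(K+1)/p_n(K)\le 2$ and $\abs{R^K}\le(\log n)/2$ on the boundary events, and reduce to $\sqrt{n}\,\Prob(K=1)\to 0$ (resp.\ $\sqrt{n}\,\Prob(K=\lfloor\sqrt{n}\log n\rfloor)\to 0$), which the paper obtains by directly invoking Proposition~\ref{prop_bound_dom}. Your inline sketch of the product estimate is essentially the proof of that proposition, so the two arguments coincide.
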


 \begin{proof}[Proof of Lemma \ref{lemma_6}]
  Using Proposition \ref{prop_bound_dom}, $\abs{h_x}\leq M$, $0\leq p_n(K+1)/p_n(K)\leq 2$ and $R^K:=(K-\lfloor \sqrt{n}\log n\rfloor/2)/\sqrt{n}$, we have
  \begin{align*}
  &\E\left[\left|\ind(K=1)\frac{1-\tau}{A+1}h_x(R^K,X_1^K)\left(\sqrt{n}\times \frac{p_n(K+1) }{p_n(K)}+R^K\right)\right|\right] \cr
  %&\qquad\leq \frac{(1-\tau) M}{A+1}\left(2\sqrt{n}+\frac{\log n }{2}\right) \E\left[\abs{\ind(K=1)}\right] \cr
  &\qquad\leq\frac{(1-\tau) M}{A+1}\left(2\sqrt{n}+\frac{ \log n }{2}\right)\Prob(K=1)\rightarrow 0,
  \end{align*}
  since
  \begin{align*}
   \ind(K=1) \abs{ \sqrt{n}\times\frac{p_n(K+1) }{p_n(K)}+R^K}&\leq \ind(K=1)\left(2\sqrt{n}+\abs{\frac{1-\lfloor \sqrt{n}\log n\rfloor /2 }{\sqrt{n}}}\right) \cr
                                                        &\leq\ind(K=1)\left( 2\sqrt{n}+\frac{\log n }{2}\right). \hspace{18mm}
  \end{align*}
   The proof that
   $$
    \E\left[\left|\ind(K=\lfloor\sqrt{n}\log n\rfloor)\frac{1-\tau}{A+1}h_x(R^K,X_1^K)\left(\sqrt{n}-R^K\right)\right|\right]\rightarrow 0
   $$
   is similar.
 \end{proof}

 \begin{lemma}\label{lemma_8}
  As $n\rightarrow\infty$, we have % the expectation of the absolute value of the eighth term in (\ref{phi2_phi_3_gen}) (starting from the bottom) converges towards 0, i.e.
  \begin{align*}
   &\E\left[\left|\ind(2\leq K\leq \lfloor \sqrt{n}\log n\rfloor-1)\frac{1-\tau}{A+1}h_x(R^K,X_1^K)\left[\sqrt{n}\left(\frac{p_n(K+1) }{p_n(K)}-1\right)+R^K\right]\right|\right]\rightarrow 0.
  \end{align*}
 \end{lemma}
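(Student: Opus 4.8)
The plan is to first strip off the bounded factors: since $\abs{h_x}\leq M$ and $(1-\tau)/(A+1)$ is a fixed constant, it suffices to show that
$$
 \E\left[\ind(2\leq K\leq \lfloor \sqrt{n}\log n\rfloor-1)\,\abs{\sqrt{n}\left(\frac{p_n(K+1)}{p_n(K)}-1\right)+R^K}\right]\rightarrow 0 \quad\text{as } n\rightarrow\infty.
$$
The structure then parallels the proof of Lemma~\ref{lemma_5} rather than that of Lemma~\ref{lemma_6}: I expect a \emph{deterministic} bound on the integrand that tends to $0$, so that no information on the law of $K$ beyond the range indicator is needed and the expectation follows at once. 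The point — arguably the only non-routine step — is the algebraic observation that on this event the $O(1)$ part of $\sqrt{n}(p_n(K+1)/p_n(K)-1)$ is exactly $\mp b_{K,n}$ while $R^K=\pm b_{K,n}$, so the leading terms cancel and only a lower-order remainder survives.

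Concretely, I would treat the case $\lfloor \sqrt{n}\log n\rfloor$ odd first, with mode $m^*:=(\lfloor \sqrt{n}\log n\rfloor+1)/2$, and split the event into $\{m^*\leq K\leq \lfloor \sqrt{n}\log n\rfloor-1\}$ and $\{2\leq K\leq m^*-1\}$. On the first piece $p_n(K+1)/p_n(K)=a_{K,n}=1-b_{K,n}/\sqrt{n}$, and since $K\geq m^*>\lfloor \sqrt{n}\log n\rfloor/2$ one has $K-\lfloor \sqrt{n}\log n\rfloor/2\geq 0$, hence $b_{K,n}=(K-\lfloor \sqrt{n}\log n\rfloor/2)/\sqrt{n}=R^K$; therefore $\sqrt{n}(a_{K,n}-1)+R^K=-b_{K,n}+R^K=0$ and this piece contributes nothing. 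On the second piece $p_n(K+1)/p_n(K)=a_{K,n}^{-1}$ and $K-\lfloor \sqrt{n}\log n\rfloor/2<0$, so $R^K=-b_{K,n}$ and a one-line computation gives
$$
 \sqrt{n}\left(a_{K,n}^{-1}-1\right)+R^K=\frac{b_{K,n}}{1-b_{K,n}/\sqrt{n}}-b_{K,n}=\frac{b_{K,n}^2/\sqrt{n}}{1-b_{K,n}/\sqrt{n}}.
$$
Since $0\leq b_{K,n}\leq (\log n)/2$ and $b_{K,n}/\sqrt{n}\leq (\log n)/(2\sqrt{n})\leq 1/2$ for $n$ large, this is bounded by $(\log n)^2/\sqrt{n}\rightarrow 0$, uniformly in $K$; taking expectations settles this case.

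The case $\lfloor \sqrt{n}\log n\rfloor$ even is handled the same way, splitting at $\lfloor \sqrt{n}\log n\rfloor/2$: for $K\geq \lfloor \sqrt{n}\log n\rfloor/2$ (which includes both modes, where $a_{K,n}=1$) we have $R^K=b_{K,n}\geq 0$ and again $\sqrt{n}(a_{K,n}-1)+R^K=0$, while for $K<\lfloor \sqrt{n}\log n\rfloor/2$ we recover the same remainder $\frac{b_{K,n}^2/\sqrt{n}}{1-b_{K,n}/\sqrt{n}}\leq (\log n)^2/\sqrt{n}$. Combining the two parities yields the claim. The main difficulty, as noted, is purely bookkeeping: keeping straight the sign of $K-\lfloor \sqrt{n}\log n\rfloor/2$ against the position of the mode and the even/odd description of $p_n$; once the exact cancellation of the $O(1)$ terms is recognised, the estimate is elementary.
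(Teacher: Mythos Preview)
Your proof is correct and follows essentially the same route as the paper's: bound $\abs{h_x}$ by $M$, split according to whether $K$ lies at/above or below the mode, observe that on the upper piece $\sqrt{n}(a_{K,n}-1)+R^K$ vanishes identically because $b_{K,n}=R^K$, and on the lower piece expand $\sqrt{n}(a_{K,n}^{-1}-1)+R^K$ to get a remainder of order $b_{K,n}^2/\sqrt{n}\leq (\log n)^2/\sqrt{n}$. Your remainder $\dfrac{b_{K,n}^2/\sqrt{n}}{1-b_{K,n}/\sqrt{n}}$ is algebraically the same quantity the paper writes as $-R^K\cdot\dfrac{b_{K,n}}{\sqrt{n}-b_{K,n}}$, and the resulting bounds coincide; the only (harmless) slip is the parenthetical ``where $a_{K,n}=1$'' in the even case, which is true at $K=\lfloor\sqrt{n}\log n\rfloor/2$ but not at the second mode --- the cancellation there still holds because $p_n(K+1)/p_n(K)=a_{K,n}$ and $R^K=b_{K,n}$ regardless.
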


 \begin{proof}[Proof of Lemma \ref{lemma_8}]
   First,
   \begin{align*}
   &\E\left[\left|\ind(2\leq K\leq \lfloor \sqrt{n}\log n\rfloor-1)\frac{1-\tau}{A+1}h_x(R^K,X_1^K)\left[\sqrt{n}\left(\frac{p_n(K+1) }{p_n(K)}-1\right)+R^K\right]\right|\right] \cr
   &\quad\leq\frac{(1-\tau)M}{A+1}\E\left[\ind(2\leq K\leq \lfloor \sqrt{n}\log n\rfloor-1)\left|\sqrt{n}\left(\frac{p_n(K+1) }{p_n(K)}-1\right)+R^K\right|\right],
  \end{align*}
   because $\abs{h_{x}}\leq M$. Considering the case where $\lfloor \sqrt{n}\log n\rfloor$ is odd, we have
  \begin{align*}
   &\E\left[\ind(2\leq K\leq \lfloor \sqrt{n}\log n\rfloor-1)\left|\sqrt{n}\left(\frac{p_n(K+1) }{p_n(K)}-1\right)+R^K\right|\right] \cr
   &\quad=\E\left[\ind\left(\frac{\lfloor \sqrt{n}\log n\rfloor+1}{2}\leq K\leq \lfloor \sqrt{n}\log n\rfloor-1\right)\left|\sqrt{n}\left(\frac{p_n(K+1) }{p_n(K)}-1\right)+R^K\right|\right] \cr
  &\qquad+\E\left[\ind\left(2\leq K\leq \frac{\lfloor \sqrt{n}\log n\rfloor-1}{2}\right)\left|\sqrt{n}\left(\frac{p_n(K+1) }{p_n(K)}-1\right)+R^K\right|\right].
  \end{align*}
  We analyse each term separately. The first one corresponds to the case where $K$ is at the mode or at its right. Therefore, $p_n(K+1)/p_n(K)=a_{K,n}$ and
  \begin{align*}
   &\E\left[\ind\left(\frac{\lfloor \sqrt{n}\log n\rfloor+1}{2}\leq K\leq \lfloor \sqrt{n}\log n\rfloor-1\right)\left|\sqrt{n}\left(\frac{p_n(K+1) }{p_n(K)}-1\right)+R^K\right|\right] \cr
   &\quad=\E\left[\ind\left(\frac{\lfloor \sqrt{n}\log n\rfloor+1}{2}\leq K\leq \lfloor \sqrt{n}\log n\rfloor-1\right)\left|\sqrt{n}\left(a_{K,n} -1\right)+R^K\right|\right] \cr
   &\quad= \E\left[\ind\left(\frac{1}{2\sqrt{n}}\leq R^K\leq  \frac{\lfloor\sqrt{n}\log n\rfloor-2}{2\sqrt{n}}\right)\abs{-b_{K,n}+R^K}\right]=0,
  \end{align*}
  because $a_{K,n}=1-b_{K,n}/\sqrt{n}$ and $b_{K,n}=\abs{K-\lfloor \sqrt{n}\log n\rfloor/2}/\sqrt{n}=R^K$ when $R^K\geq0$ ($R^K:=(K-\lfloor \sqrt{n}\log n\rfloor/2)/\sqrt{n}$). We now study the case where $K$ is at the left of the mode. Therefore, $p_n(K+1)/p_n(K)=a_{K,n}^{-1}$ and
  \begin{align*}
   &\E\left[\ind\left(2\leq K\leq \frac{\lfloor \sqrt{n}\log n\rfloor-1}{2}\right)\left|\sqrt{n}\left(\frac{p_n(K+1) }{p_n(K)}-1\right)+R^K\right|\right] \cr
   &\quad=\E\left[\ind\left(2\leq K\leq \frac{\lfloor \sqrt{n}\log n\rfloor-1}{2}\right)\left|\sqrt{n}\left(a_{K,n}^{-1}-1\right)+R^K\right|\right] \cr
   &\quad=\E\left[\ind\left(2\leq K\leq \frac{\lfloor \sqrt{n}\log n\rfloor-1}{2}\right)\left|\frac{\sqrt{n}}{\sqrt{n}-b_{K,n}}\times b_{K,n}+R^K\right|\right] \cr
  &\quad=\E\left[\ind\left(\frac{4-\lfloor \sqrt{n}\log n\rfloor}{2\sqrt{n}}\leq R^K \leq \frac{-1}{2\sqrt{n}}\right)\times-R^K\left|\frac{\sqrt{n}}{\sqrt{n}-b_{K,n}}-1\right|\right] \cr
  &\quad\leq \E\left[\ind\left(\frac{4-\lfloor \sqrt{n}\log n\rfloor}{2\sqrt{n}}\leq R^K \leq \frac{-1}{2\sqrt{n}}\right)\frac{\log n}{2}\times\frac{b_{K,n}}{\sqrt{n}-b_{K,n}}\right] \cr
  &\quad\leq \frac{\log n}{2}\times\frac{(\log n)/2}{\sqrt{n}-(\log n)/2}\rightarrow 0,
  \end{align*}
  using similar mathematical arguments as above, the fact that $b_{K,n}=-R^K$ when $R^K<0$, and that $1/(2\sqrt{n})\leq-R^K\leq (\log n)/2$ when $(4-\lfloor \sqrt{n}\log n\rfloor)/$ $(2\sqrt{n})\leq R^K \leq -1/(2\sqrt{n})$. The proof for the case where $\lfloor \sqrt{n}\log n\rfloor$ is even is similar.
 \end{proof}

 \begin{prop}\label{prop_bound_dom}
 The random variable $K$ with PMF $p$ defined in Section~\ref{sec_context_weak} is such that, for all $\rho\in\re$,
 $$
  \lim_{n\rightarrow\infty}n^\rho\Prob(K=1)=\lim_{n\rightarrow\infty}n^\rho\Prob(K=\lfloor \sqrt{n}\log n\rfloor)=0.
 $$
 \end{prop}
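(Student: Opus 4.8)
The plan is to exploit the explicit product representation of $p_n$ recorded in Section~\ref{sec_context_weak}. The first step is to observe that $K=1$ and $K=\lfloor\sqrt{n}\log n\rfloor$ are precisely the two values of $K$ lying farthest from the mode of $p_n$, both at distance $M_n$, where $M_n:=(\lfloor\sqrt{n}\log n\rfloor-1)/2$ when $\lfloor\sqrt{n}\log n\rfloor$ is odd and $M_n:=\lfloor\sqrt{n}\log n\rfloor/2-1$ when it is even. Combining this with the symmetry of $p_n$ and the product formulas displayed in Section~\ref{sec_context_weak}, both $\Prob(K=1)$ and $\Prob(K=\lfloor\sqrt{n}\log n\rfloor)$ equal $p_n^\ast\prod_{i=1}^{M_n}(1-c_i/n)$, where $p_n^\ast$ denotes the value of $p_n$ at the mode and $c_i=i-1/2$ in the odd case, $c_i=i$ in the even case. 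Since $p_n^\ast\leq 1$ (it is a value of a PMF) and, for $n$ large, $c_i/n\leq M_n/n\to 0$ so that every factor is strictly positive, it suffices to bound the product from above.

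The second step is the elementary inequality $1-x\leq\ee^{-x}$, which gives
$$
 \prod_{i=1}^{M_n}\left(1-\frac{c_i}{n}\right)\leq\exp\left(-\frac{1}{n}\sum_{i=1}^{M_n}c_i\right)\leq\exp\left(-\frac{M_n^2}{2n}\right),
$$
using $\sum_{i=1}^{M_n}(i-1/2)=M_n^2/2$ and $\sum_{i=1}^{M_n}i\geq M_n^2/2$. Since $\lfloor\sqrt{n}\log n\rfloor\geq\sqrt{n}\log n-1$, one has $M_n\geq\tfrac{1}{2}\sqrt{n}\log n-2$, whence
$$
 \frac{M_n^2}{2n}\geq\frac{(\log n)^2}{8}-\frac{\log n}{\sqrt{n}},
$$
so that $\Prob(K=1)$ and $\Prob(K=\lfloor\sqrt{n}\log n\rfloor)$ are each at most $\exp\{-(\log n)^2/8+(\log n)/\sqrt{n}\}$.

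The third and last step is to multiply by $n^\rho$: for any fixed $\rho\in\re$,
$$
 n^\rho\exp\left(-\frac{(\log n)^2}{8}+\frac{\log n}{\sqrt{n}}\right)=\exp\left(\rho\log n-\frac{(\log n)^2}{8}+\frac{\log n}{\sqrt{n}}\right)\rightarrow 0
$$
as $n\to\infty$, because the term quadratic in $\log n$ dominates the others. This yields both limits simultaneously.

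I do not expect a serious obstacle here; the only points requiring care are keeping the two parity cases straight when reading off the product formulas from Section~\ref{sec_context_weak}, and checking that the factors $1-c_i/n$ remain positive over the whole range (immediate, since $M_n$ is of order $\sqrt{n}\log n$, negligible next to $n$). The underlying mechanism is simply that $p_n$ decays in a Gaussian-like fashion away from its mode, which beats any polynomial factor $n^\rho$ by a wide margin.
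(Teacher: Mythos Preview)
Your proof is correct and follows essentially the same route as the paper's: write $p_n(1)=p_n(\lfloor\sqrt{n}\log n\rfloor)$ via the product formula, apply $1-x\leq\ee^{-x}$ to turn the product into a Gaussian-type factor of order $\exp\{-c(\log n)^2\}$, and observe that this decays faster than any power of $n$. The only cosmetic differences are that you treat both parities in a single unified bound and use the trivial estimate $p_n^\ast\leq 1$, whereas the paper appeals to the convergence $p_n^\ast\to 1/\sqrt{2\pi}$ established in the proof of Proposition~\ref{prop_z_1_to_normal}; your choice is slightly cleaner here.
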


 \begin{proof}[Proof of Proposition~\ref{prop_bound_dom}]
  Consider the case where $\lfloor \sqrt{n}\log n\rfloor$ is even. Using equation~(\ref{eqn_prob_k_even}), we have
  $$
   p_n(1)=p_n(\lfloor \sqrt{n}\log n\rfloor)=p_n\left(\frac{\lfloor \sqrt{n}\log n\rfloor}{2}\right)\prod_{i=1}^{\frac{\lfloor \sqrt{n}\log n\rfloor}{2}-1}\left(1-\frac{i}{n}\right).
  $$
  In the proof of Proposition~\ref{prop_z_1_to_normal} in Section~\ref{sec_proofs_props}, we show that $p_n(\lfloor \sqrt{n}\log n\rfloor/2)\rightarrow 1/\sqrt{2\pi}$ as $n\rightarrow\infty$. Also, using the fact that $1-x\leq\exp\{-x\}$ for all $x\in\re$, we have for all $\rho\in\re$
  \begin{align*}
   n^\rho\prod_{i=1}^{\frac{\lfloor \sqrt{n}\log n\rfloor}{2}-1} \left(1-\frac{i}{n}\right)&\leq  n^\rho \exp\left\{-\sum_{i=1}^{\frac{\lfloor \sqrt{n}\log n\rfloor}{2}-1} \frac{i}{n}\right\} \cr
   &=n^\rho\exp\left\{-\frac{1}{2}\left(\frac{\frac{\lfloor \sqrt{n}\log n\rfloor}{2}-1}{\sqrt{n}}\right)^2 -\frac{\frac{\lfloor \sqrt{n}\log n\rfloor}{2}-1}{2n}\right\}
  \end{align*}
  \begin{align*}
   &\hspace{20mm}\leq n^\rho \exp\left\{-\frac{1}{2}\left(\frac{\lfloor \sqrt{n}\log n\rfloor}{2\sqrt{n}}-1\right)^2\right\}\rightarrow 0,
  \end{align*}
  as $n\rightarrow\infty$. Similarly, we can show the result for the case where $\lfloor \sqrt{n}\log n\rfloor$ is odd.
 \end{proof}

\section{Proofs of Propositions~\ref{prop_z_1_to_normal} and \ref{prop_prob_acc_ajout}}\label{sec_proofs_props}

\begin{proof}[Proof of Proposition~\ref{prop_z_1_to_normal}]
  \quad The random variable $Z_1^n(t)$ is defined as $(K(\lfloor nt \rfloor)-\lfloor \sqrt{n}\log n\rfloor/2)/\sqrt{n}$, and $K(\lfloor nt \rfloor)\sim p_n$ for all $t$ (see Section~\ref{sec_context_weak} for the assumptions on $p_n$). Therefore, to simplify the notation, the time index is omitted for the rest of the proof. Consider the constant $z<0$ and the case where $\lfloor \sqrt{n}\log n\rfloor$ is even. We have
  \begin{align*}
   \Prob((K-\lfloor \sqrt{n}\log n\rfloor/2)/\sqrt{n}\leq z)=\Prob(K\leq z\sqrt{n}+\lfloor \sqrt{n}\log n\rfloor/2) &= \sum_{k=\lceil -z\sqrt{n}\rceil}^{\lfloor \sqrt{n}\log n\rfloor/2-1} p_n(\lfloor \sqrt{n}\log n\rfloor/2-k) \cr
   &\hspace{-8mm}=p_n\left(\frac{\lfloor \sqrt{n} \log n\rfloor}{2}\right)\sum_{k=\lceil -z\sqrt{n}\rceil}^{\frac{\lfloor \sqrt{n}\log n\rfloor}{2}-1} \prod_{i=1}^k \left(1-\frac{i}{n}\right),
  \end{align*}
  using $\lfloor z\sqrt{n}+\lfloor \sqrt{n}\log n\rfloor/2\rfloor=\lfloor z\sqrt{n}\rfloor+\lfloor \sqrt{n}\log n\rfloor/2= \lfloor \sqrt{n}\log n\rfloor/2-\lceil- z\sqrt{n}\rceil$ in the second equality ($\lceil \cdot \rceil$ is the ceiling function), and equation~(\ref{eqn_prob_k_even}) in the last equality. The sum above is well-defined if $n\geq \exp(-2z+5)$ and we select $n$ large enough to ensure this. Using again equation~(\ref{eqn_prob_k_even}) and the fact that $\sum_{k=1}^{\lfloor\sqrt{n}\log n\rfloor} p_n(k)=1$, we have
  $$
   p_n\left(\frac{\lfloor \sqrt{n} \log n\rfloor}{2}\right)=\left(2\left(1+\sum_{k=1}^{\lfloor \sqrt{n}\log n\rfloor/2-1} \prod_{i=1}^k \left(1-\frac{i}{n}\right)\right)\right)^{-1}.
  $$
  Therefore,
  $$
   \Prob\left(\frac{K-\lfloor \sqrt{n}\log n\rfloor/2}{\sqrt{n}}\leq z\right)=\frac{(1/\sqrt{n})\sum_{k=\lceil -z\sqrt{n}\rceil}^{\lfloor \sqrt{n}\log n\rfloor/2-1} \prod_{i=1}^k (1-i/n)}{\frac{2}{\sqrt{n}}+\frac{2}{\sqrt{n}}\sum_{k=1}^{\lfloor \sqrt{n}\log n\rfloor/2-1} \prod_{i=1}^k (1-i/n)}.
  $$
  Using the fact that $1-x\leq\exp\{-x\}$ for all $x\in\re$, we have
  $$
   \prod_{i=1}^k \left(1-\frac{i}{n}\right)\leq   \exp\left\{-\sum_{i=1}^k \frac{i}{n}\right\}=\exp\left\{-\frac{1}{2}\left(\frac{k}{\sqrt{n}}\right)^2 -\frac{k}{2n}\right\}.
  $$
  In addition, for all $\delta>0$, there exists $\epsilon>0$ such that $\exp\{-(1+\delta)x\}\leq 1-x$ for $0\leq x<\epsilon$. Therefore, since $0\leq i/n\leq \lfloor\sqrt{n}\log n\rfloor/(2n)-1/n\leq \log n/(2\sqrt{n})\rightarrow 0$ as $n\rightarrow\infty$ when $1\leq i\leq k\leq \lfloor \sqrt{n}\log n\rfloor/2-1$, for all $\delta>0$, there exists a constant $N>0$ such that for all $n\geq N$,
  $$
   \prod_{i=1}^k \left(1-\frac{i}{n}\right)\geq\exp\left\{-\frac{(1+\delta)}{2}\left(\frac{k}{\sqrt{n}}\right)^2 -\frac{k(1+\delta)}{2n}\right\}.
  $$
  The objective is to use a ``Riemann sum'' argument, where the length of the subintervals of the partition is $1/\sqrt{n}$, to study the asymptotic behaviour of the numerator and denominator of $\Prob((K-\lfloor \sqrt{n}\log n\rfloor/2)/\sqrt{n}\leq z)$. More precisely, we now prove that the numerator of $\Prob((K-\lfloor \sqrt{n}\log n\rfloor/2)/\sqrt{n}\leq z)$ converges towards $\int_{-z}^\infty \exp(-x^2/2) dx$ and that the denominator converges towards $\int_{-\infty}^\infty \exp(-x^2/2) dx=\sqrt{2\pi}$. To achieve this, we use Lebesgue's dominated convergence theorem. First, we rewrite the numerator as
  $$
   \frac{1}{\sqrt{n}}\sum_{k=\lceil -z\sqrt{n}\rceil}^{\frac{\lfloor \sqrt{n}\log n\rfloor}{2}-1} \prod_{i=1}^k \left(1-\frac{i}{n}\right)= \int_{-z}^\infty \sum_{k=\lceil -z\sqrt{n}\rceil}^{\frac{\lfloor \sqrt{n}\log n\rfloor}{2}-1} \prod_{i=1}^k \left(1-\frac{i}{n}\right)\ind_{\left[\frac{k}{\sqrt{n}},\frac{k+1}{\sqrt{n}}\right)}(x) dx.
  $$
  Now, we analyse the integrand. For all $x\in(-z,\infty)$ and for large enough $n$, there exists a unique $k'\in\{\lceil -z\sqrt{n}\rceil,\ldots, \lfloor \sqrt{n}\log n\rfloor/2-1\}$ with $\ind_{[k'/\sqrt{n},(k'+1)/\sqrt{n})}(x)=1$. Also, $0\leq x-k'/\sqrt{n}<1/\sqrt{n}$, which implies that $k'/\sqrt{n}\rightarrow x$ as $n\rightarrow\infty$. Consequently, using the upper bound and the lower bound on $\prod_{i=1}^k (1-i/n)$,
  $$
   \sum_{k=\lceil -z\sqrt{n}\rceil}^{\frac{\lfloor \sqrt{n}\log n\rfloor}{2}-1} \prod_{i=1}^k \left(1-\frac{i}{n}\right)\ind_{\left[\frac{k}{\sqrt{n}},\frac{k+1}{\sqrt{n}}\right)}(x)=\prod_{i=1}^{k'} \left(1-\frac{i}{n}\right)\rightarrow \exp\{-x^2/2\},
  $$
  as $n\rightarrow\infty$, because $k'/n\leq \lfloor\sqrt{n}\log n\rfloor/(2n)-1/n\leq \log n/(2\sqrt{n})\rightarrow 0$. Now, we prove that the integrand is bounded by an integrable function that does not depend on $n$. For all $x\in(-z,\infty)$,
  \begin{align*}
    \prod_{i=1}^k \left(1-\frac{i}{n}\right)\ind_{\left[\frac{k}{\sqrt{n}},\frac{k+1}{\sqrt{n}}\right)}(x)&\leq \exp\left\{-\frac{1}{2}\left(\frac{k}{\sqrt{n}}\right)^2  -\frac{k}{2n}\right\}\ind_{\left[\frac{k}{\sqrt{n}},\frac{k+1}{\sqrt{n}}\right)}(x) \leq \exp\left\{-\frac{1}{2}(x-1)^2\right\}\ind_{\left[\frac{k}{\sqrt{n}},\frac{k+1}{\sqrt{n}}\right)}(x),
  \end{align*}
  using the upper bound on $\prod_{i=1}^k (1-i/n)$ in the first inequality, and then $x\leq (k+1)/\sqrt{n}\leq (k/\sqrt{n})+1$. As a result,
  \begin{align*}
   \sum_{k=\lceil -z\sqrt{n}\rceil}^{\frac{\lfloor \sqrt{n}\log n\rfloor}{2}-1} \prod_{i=1}^k \left(1-\frac{i}{n}\right)\ind_{\left[\frac{k}{\sqrt{n}},\frac{k+1}{\sqrt{n}}\right)}(x)&\leq \exp\left\{-\frac{1}{2}(x-1)^2\right\} \sum_{k=\lceil -z\sqrt{n}\rceil}^{\frac{\lfloor \sqrt{n}\log n\rfloor}{2}-1}\ind_{\left[\frac{k}{\sqrt{n}},\frac{k+1}{\sqrt{n}}\right)}(x)\cr
   &= \exp\left\{-\frac{1}{2}(x-1)^2\right\} \ind_{\left[\frac{\lceil -z\sqrt{n}\rceil}{\sqrt{n}},\frac{\lfloor \sqrt{n}\log n\rfloor/2}{\sqrt{n}}\right)}(x)\leq \exp\left\{-\frac{1}{2}(x-1)^2\right\},
  \end{align*}
  which is integrable. Similarly, we can prove that the denominator of $\Prob((K-\lfloor \sqrt{n}\log n\rfloor/2)/$ $\sqrt{n}\leq z)$ converges towards $\int_{-\infty}^\infty \exp(-x^2/2) dx=\sqrt{2\pi}$, and we can show the result for $z\geq 0$ and for the case where $\lfloor \sqrt{n}\log n\rfloor$ is odd.
  \end{proof}

  \begin{proof}[Proof of Proposition~\ref{prop_prob_acc_ajout}]
    The random variables $K(m)$ and $U(m+1)$ are independent and such that $K(m)\sim p_n$ and $U(m+1)\sim q$ for all $m\in\na$ (see Section~\ref{sec_context_weak} for the assumptions on $p_n$ and $q$). Therefore, to simplify the notation, the time index is omitted for the rest of the proof. As explained in Section~\ref{sec_proof_finite_dim},
    $$
     \E\left[1 \wedge \frac{f(U)p_n(K+1)}{q(U)p_n(K)A}\right]=\E\left[ \frac{f(U)p_n(K+1)}{q(U)p_n(K)A}\right],
  $$
  and $\E[f(U)/q(U)]=1$. Finally, using Proposition~\ref{prop_bound_dom}, we have
  \begin{align*}
   &\frac{1}{A}\E\left[ \frac{p_n(K+1)}{p_n(K)}\right]=\frac{1}{A}\sum_{k=1}^{\lfloor \sqrt{n}\log n\rfloor-1}p_n(k+1)=\frac{1}{A}(1-p_n(1))\rightarrow \frac{1}{A} \text{ as } n\rightarrow\infty.
  \end{align*}
  \end{proof}

 % complet : ça converge pour vrai, toujours supposé
 % separable : un sous-ensemble dénombrable dense, approximer par une suite
 % A est un opérateur donc une fonction sur une fonction
 % la fermeture du graph de A génère un semi-groupe donc on indice la fonction par le temps etc.

% AOS,AOAS: If there are supplements please fill:
%\begin{supplement}[id=suppA]
%  \sname{Supplement A}
%  \stitle{Title}
%  \slink[doi]{10.1214/00-AOASXXXXSUPP}
%  \sdatatype{.pdf}"
%  \sdescription{Some text}
%\end{supplement}

\section{Acknowledgements}

The authors acknowledge support from the NSERC (Natural Sciences and Engineering  Research  Council  of  Canada),  the  FRQNT  (Le Fonds de recher\-che du Qu\'{e}bec - Nature et technologies) and the SOA (Society of Actuaries). The authors thank an anonymous referee and an associate editor for their helpful comments.

\bibliographystyle{model1b-num-names}
\bibliography{reference}

\end{document}